\documentclass[11pt]{article}
\usepackage{graphicx} 
\usepackage[english]{babel}
\usepackage[utf8]{inputenc}
\usepackage[a4paper, width=150mm,top=25mm,bottom=25mm]{geometry}
\usepackage{mathpazo}
\usepackage{setspace}

\usepackage{graphicx}
\usepackage{appendix}
\usepackage{epigraph}
\usepackage{longtable}
\usepackage{array}
\usepackage{caption}
\usepackage{subcaption}
\usepackage{hyphenat}
\usepackage{caption}
\usepackage{multirow}
\usepackage{longtable}
\usepackage{makecell}
\usepackage{bbding}
\usepackage{booktabs}
\usepackage{dcolumn}
\usepackage{algorithm}
\usepackage{bm}
\usepackage{amsmath}
\usepackage{mathrsfs}
\usepackage{tikz}
\usepackage{tabularx}
\usetikzlibrary{calc}
\usepackage{tabularray}
\usepackage{multirow,array}
\usepackage{comment}
\usepackage{quiver}

\usepackage{amsthm}
\usepackage{amssymb}
\usepackage{bm}
\usepackage[bbgreekl]{mathbbol}
\usepackage{xspace}
\usepackage{afterpage}

\usepackage{booktabs} 
\usepackage{float}
\usepackage{multirow}
\usepackage{mathtools}
\usepackage{natbib}
\usepackage[flushleft]{threeparttable}
\newtheorem{definition}{Definition}
\newtheorem{theorem}{Theorem}
\newtheorem{corollary}{Corollary}

\usepackage{authblk}

\usepackage{tikz}
\foreach \a in {q,w,e,r,t,y,u,i,o,p,a,s,d,f,g,h,j,k,l,z,x,c,v,b,n,m,Q,W,E,R,T,Y,U,I,O,P,A,S,D,F,G,H,J,K,L,Z,X,C,V,B,N,M}{
  \expandafter\xdef\csname v\a\endcsname {
		{\noexpand\mathbf{\a}}
	}
}

\newtheorem{remark}{Remark}

\usepackage[T1]{fontenc}
\usepackage{lmodern}
\title{Granger Causality in Expectiles: an M-vine copula test}
\author[1,2]{Roberto Fuentes-Martínez}
\author[1]{Irene Crimaldi}
\date{}
\affil[1]{\small IMT School for Advanced Studies Lucca}
\affil[2]{\small Universidad de Alicante}

\begin{document}

\maketitle

{\bf Abstract.} {A model-free measure of Granger causality in expectiles is proposed, generalizing the traditional mean-based measure to arbitrary positions of the conditional distribution. Expectiles are the only law-invariant risk measures that are both coherent and elicitable, making them particularly well-suited for studying distributional Granger causality where risk quantification and forecast evaluation are both relevant. Based on this measure, a test is developed using M-vine copula models that accounts for multivariate Granger causality with $d+1$ series under non-linear and non-Gaussian dependence, without imposing parametric assumptions on the joint distribution. Strong consistency of the test statistic is established under some regularity conditions. In finite samples, simulations show accurate size control and power increasing with sample size. A key advantage is the joint testing capability: causal relationships invisible to pairwise tests can be detected, as demonstrated both theoretically and empirically. Two applications to international stock market indices at the global and Asian regional level illustrate the practical relevance of the proposed framework.}

\section{Introduction}

The classical Granger causality framework (Granger, 1969) assesses whether lagged values of one stochastic process provide statistically significant information for forecasting another, conditional on the past of the latter. While widely applied, this approach is inherently limited to linear predictive structures in the conditional mean, rendering it inadequate in the presence of non-linear, asymmetric, or tail-dependent relationships that frequently characterize financial, macroeconomic, and environmental time series.
\\

To overcome these limitations, recent methodological advances have extended Granger causality to notions that go beyond the conditional mean, offering a distributional perspective on causal dependence. Most of this literature has focused on the notion of Granger causality in quantiles. For instance, the parametric approach from \cite{troster2018}, the non-parametric approaches from \cite{jeong2012} and \cite{balcilar2016}, and the copula-based framework from \cite{lee2014}, among others. Later, \cite{song2021} introduced model-free measures of Granger causality in quantiles, extending the approach of \cite{song2018} to different parts of the distribution. Based on this measure, \cite{jang2023} proposed a test for Granger causality in quantiles using vine copulas. More recently, \cite{bouezmarni2024} introduced the first parametric test for Granger non-causality in expectiles based on expectiles regressions. Despite this progress, no model-free or copula-based tests for Granger causality in expectiles have been proposed, particularly one that can accommodate multivariate causality under non-linear and non-Gaussian dependence.\\

Shifting the focus from Granger causality in quantiles to expectiles is motivated from the theory of risk measures. Quantiles have been widely used in financial and risk management literature under the name of Value-at-Risk, nonetheless, quantiles are not a coherent risk measure in the sense of \cite{artzner1999} as they lack sub-additivity. Expectiles were introduced by \cite{newey1987} as the minimizers of an asymmetric quadratic loss and they constitute the only class of law-invariant risk measures that are both coherent and elicitable \citep{bellini2015,ziegel2016}. These two properties make expectiles particularly well-suited for distributional Granger causality analysis in contexts where risk quantification and forecast evaluation are both relevant.\\

Moreover, the incorporation of copula theory provides a flexible and model-agnostic tool for capturing non-linear and asymmetric dependence structures between variables. Copulas allow the joint distribution of time series to be decomposed into their marginal distributions and a dependence function, facilitating the explicit modeling of tail dependence and other complex interactions that traditional linear models fail to capture. For a comprehensive treatment of copula theory we refer to \cite{nelsen2007}, \cite{joe2014}, and \cite{durante2015}. More recently, vine copulas, that are hierarchical constructions that decompose high-dimensional copulas into a cascade of bivariate building blocks, have been employed to model both cross-sectional and temporal dependence in stationary time series. Some examples of these structures are the M-vines \citep{beare2015}, the COPAR-vines \citep{brechmannczado2015}, and the D-vines \citep{smith2015}. Building on their work, \cite{nagler2022} generalized these models introducing the S-vines, a broader class of all the vine structures that can represent stationary time series.\\ 

Integrating copula-based dependence modeling with expectile-based Granger causality constitutes a framework that jointly addresses distributional asymmetry, tail sensitivity, and non-linear dependence, yielding a more comprehensive characterization of causal mechanisms in complex stochastic systems. Such a framework has broad applicability, particularly in domains where extreme co-movements and heterogeneous dependence play a crucial role, such as risk management, financial contagion, and climate dynamics.
\\

The contribution of the present paper is threefold. First,  we define a model-free measure of Granger causality in expectiles that generalizes the mean-focused measure of \cite{song2018} to other positions of the conditional distribution. This measure inherits the desirable properties of its predecessors: being non-negative and equal to zero if and only if there is no Granger causality in the $\tau$-th expectile. Moreover, it reduces to the mean-based measure of \cite{song2018} when $\tau = 1/2$.  Second, we propose a test for Granger causality in expectiles based on the M-vine copula model of \cite{beare2015}, extending the bivariate vine copula testing framework of \cite{jang2022} and \cite{fuentesmartinez2025}
to the expectile domain and, crucially, to the multivariate setting with $d+1$ series. Indeed, as we will illustrate both in simulations and with an empirical application, there are scenarios in which pairwise dependencies are present, but too small to be detected by pairwise tests, while the joint effect is sizable enough for the proposed multivariate test to catch it. 
In addition, we will also construct 
a theoretical case in which pairwise Granger causality in the mean is entirely absent, while joint Granger causality in the mean is present and, with simulations, we show that our test is able to detect it. Theoretical results supporting
the consistency of the estimators are provided and, through a simulation study, we also demonstrate that the proposed test has good finite-sample size control and it exhibits a power that increases with sample size $T$ until reaching excellent values already for $T\geq 200$. Moreover, for $\tau=1/2$ (the case of Granger causality in the mean) it has been possible to compare the proposed method with the classical linear Granger causality in the mean F-test and the comparison has revealed a comparable size, but a strongly superior power in the presence of non-linear dependencies.  
Third, we provide two empirical applications of the proposed test to international stock markets: in particular, the joint testing reveals causal relationships that are invisible to pairwise tests. 
Indeed, in one of them,  pairwise Granger causality is not strong enough to be significant for each series, but the joint test clearly rejects the null hypothesis of no Granger causality for expectiles located in the left tail of the distribution. 
\\

The remainder of this paper is organized as follows. Section~\ref{gcexp} defines Granger causality in expectiles and proposes the model-free measure. Section~\ref{mvtest} describes the M-vine copula test procedure. Section~\ref{simstudy} presents the simulation study. Section~\ref{empapp} applies the test to global and Asian stock market data. Section~\ref{conc} concludes. Theoretical results on the consistency of the estimators are gathered in the Appendix.

\section{Granger causality in expectiles}\label{gcexp}

Let $(X_t, \vZ_t)_{t\in\mathbb{Z}}$ be a (strictly) stationary stochastic process,  with $X_t$ taking values in ${\mathbb R}$ and $\vZ_t=(Z_{1,t},\dots,Z_{d,t})$ taking values in ${\mathbb R}^{d}$,  that is 
$$
[(X_t, \vZ_t),\dots, (X_{t+h}, \vZ_{t+h})] \stackrel{D}=
[(X_s, \vZ_s),\dots, (X_{s+h}, \vZ_{s+h})]\qquad\forall s,\,t\in\mathbb{Z},\, h\in\mathbb{N}
$$ 
where the symbol $\stackrel{D}=$ means equality in distribution. Suppose further that $(X_t, \vZ_t)_{t\in\mathbb{Z}}$ is a Markov process of order $k\in \mathbb{N}\setminus\{0\}$ (more briefly, a $k$-Markov process), that is 
$$
[X_{t},\vZ_t] \,|\, [(X_s,\vZ_s):\,s\leq t-1]\stackrel{D}= [X_{t},\vZ_t] \,|\, [(X_{t-1},\vZ_{t-1}),\dots, (X_{t-k},\vZ_{t-k})]
\qquad\forall t\,.
$$ 
Set $X=(X_t)_{t\in\mathbb{Z}}$ and $\vZ=(\vZ_t)_{t\in\mathbb{Z}}$. Moreover, 
denote by ${\mathcal I}_{X}(s)$ the $\sigma-$field $\sigma(X_u: u\leq s)$, that is the information brought by $X$ until time-step $s$ and by ${\mathcal I}_{X, \vZ}(s)$ the $\sigma-$field $\sigma((X_u, \vZ_u): u\leq s)$, that is the information brought by $X$ and $\vZ$ until time-step $s$. Assuming $X_t$ and $\vZ_t$ to be square integrable, we say that $\vZ$ causes $X$ (in the mean) 
in the sense of Granger, if  
\begin{equation*}
\sigma^2(X_t|\, {\mathcal I}_{X,\vZ}(t-k) \,) < \sigma^2(X_t|\, {\mathcal I}_{X}(t-k) \,)
\end{equation*}
where 
\begin{equation*}
\begin{split}
&\sigma^2(X_t|\, {\mathcal I}_{X}(t-k) \,)=\mathbb{E}\left[(X_t-\mathbb{E}[X_t|\, {\mathcal I}_{X}(t-k)])^2\right]
\quad\mbox{and}\\
&\sigma^2(X_t|\, {\mathcal I}_{X,\vZ}(t-k) \,)=\mathbb{E}\left[(X_t-\mathbb{E}[X_t|\, {\mathcal I}_{X,\vZ}(t-k) ])^2\right]
\end{split}
\end{equation*} 
are the mean squared errors of the optimal prediction of $X_t$ 
given the information ${\mathcal I}_{X}(t-k)$ and given the information ${\mathcal I}_{X,\vZ}(t-k)$, respectively. \cite{song2018} proposed a model-free measure of Granger causality in the mean for the bivariate case, which can be easily adapted into our multivariate setting as

\begin{equation*}
GC^{mean}(\vZ\rightarrow X)=\log 
\left[
\frac
{\sigma^2(X_t|\, {\mathcal I}_{X}(t-k) \,)}
{\sigma^2(X_t|\, {\mathcal I}_{X,\vZ}(t-k) \,)}
\right].
\end{equation*}

While most of the literature has focused on the notion of Granger causality in the mean, this can also be extended for other parts of the distribution. For instance, several tests have been proposed for testing Granger causality in quantiles \citep{jeong2012,lee2014,balcilar2016,troster2018,jang2023}. Moreover, \cite{song2021} also propose a model-free measure of Granger causality in quantiles. Given that quantiles are often used as risk measures in financial risk management, testing for Granger causality in quantiles can be understood as testing if the past information of previous extreme events from other risk factors can improve the prediction of future extreme events of a given risk factor of interest. However, quantiles are not a coherent risk measure as they are not sub-additive. Hence, in order to properly measure Granger causality in risk, one could use other risk measures, such as expectiles. Expectiles are the only law invariant coherent and elicitable risk measure \citep{bellini2015,ziegel2016}. They are defined as the minimizers of an asymmetric quadratic loss: formally, given $\tau\in (0,1)$, the {\em $\tau$-th expectile of $X$} is defined as 
$$
\mu_{\tau} (X)=\arg\min_{m\in\mathbb{R}} \mathbb{E}\left[\,\mathcal{R}_\tau(X-m)\,\right]
$$
where 
\begin{equation*}
\begin{split}
\mathcal{R}_\tau(X-m)&= |\,\tau - \mathbb{I}_{\{(X-m)<0\}} \,|\, (X-m)^2 \\
&=
\left(\tau\,\mathbb{I}_{\{(X-m)\geq  0\}}+(1-\tau)\,\mathbb{I}_{\{(X-m)<0\}}\right)\,(X-m)^2\\
&=
\tau\, (X-m)_+^2 + (1-\tau)\,(X-m)_-^2
\end{split}
\end{equation*}
with $x_+=\max\{x,0\}$ and $x_-=\max\{-x,0\}$.\\

Using the definition of expectiles, we can provide a definition of Granger causality in expectiles.

\begin{definition}{\textbf{(Granger causality in the $\tau$-th expectile)}}
Assuming $X_t$ and $\vZ_t$ to be square integrable and given $\tau\in (0,1)$, we say that $\vZ$ Granger-causes $X$ through its $\tau$-th expectile, if  
\begin{equation*}
\mathbb{E}\left[ \mathcal{R}_\tau(X_t-\mu_\tau(X_t|\, {\mathcal I}_{X,\vZ}(t-k))\right ] < \mathbb{E}\left[ \mathcal{R}_\tau(X_t-\mu_\tau(X_t|\, {\mathcal I}_{X}(t-k))\right ]\,,
\end{equation*}
where 
\begin{equation*}
\begin{split}
\mu_\tau(X|\, {\mathcal I})&=\arg\min_{m\in\mathbb{R}} \mathbb{E}\left[\mathcal{R}_\tau(X-m)\,|\,{\mathcal I}\right]
\\
&=
\arg\min_{m\in\mathbb{R}} \left( \tau\,\mathbb{E}[(X-m)_+^2\,|\,\mathcal{I}]+(1-\tau)\,\mathbb{E}[(X-m)_-^2\,|\,{\mathcal I}]\right).
\end{split}
\end{equation*}
\end{definition}

In order to measure the degree of causality in the $\tau$-th expectile from $\vZ$ to $X$, one could use the difference between these two loss functions. This allows us to propose a model-free measure of Granger causality in expectiles.

\begin{definition}{\textbf{($\tau$-th expectile Granger causality measure)}} Assuming $X_t$ and $\vZ_t$ to be square integrable and given $\tau\in (0,1)$, we define the $\tau$-th expectile Granger causality measure from $\vZ$ to $X$ as
$$
GC_{\tau}(\vZ \rightarrow X)=\log 
\left[
\frac
{\mathbb{E}\left[ \mathcal{R}_\tau(X_t-\mu_\tau(X_t|\, {\mathcal I}_{X}(t-k)))\right ]}
{\mathbb{E}\left[ \mathcal{R}_\tau(X_t-\mu_\tau(X_t|\, {\mathcal I}_{X,\vZ}(t-k)))\right ]}
\right]\,,
$$
provided the two mean values are strictly positive. 
\end{definition}

The expectile causality measure $GC_{\tau}(\vZ \rightarrow X)$ is aligned with the definition of measure functions of dependence and feedback between time series in \cite{geweke1982}, as well as with the ones introduced in \cite{song2018,song2021}. In fact, it has the relevant properties that these measures include, such as being non-negative and canceling only when there is no Granger causality.

	\begin{remark}\label{remgen}\rm 
	For $\tau=1/2$ the $GC_{\tau}(\vZ \rightarrow X)$ is equal to $GC^{mean}(\vZ \rightarrow X)$ by \cite{song2018}.	This follows from the fact that when  $\tau=\frac{1}{2}$, the asymmetric quadratic loss $\mathcal{R}_\tau$ is equal to the mean squared error, and given that the minimizer of the mean squared error is the mean, it implies that $\mu_{1/2}(X)=\mathbb{E}[X]$. Hence, $GC_{1/2}(\vZ \rightarrow X)=GC^{mean}(\vZ\rightarrow X)$.
	\end{remark}

\section{M-vine test for Granger causality in expectiles}\label{mvtest}

Let $(X,Z_1,\dots,Z_d)=(X_t,Z_{1,t},\dots,Z_{d,t})_t$ be a $k-$Markov square-integrable (strictly) stationary stochastic process and let $\{(x_t,z_{1,t},\dots,z_{d,t})\;:\;t=1,\dots,T\}$ be a sample of it. For simplicity, we assume that $k=1$, but everything can be naturally extended to any order $k\in\mathbb{N}\setminus\{0\}$. \\


  
\noindent The following test is an extension of the one introduced in~\cite{fuentesmartinez2025} for Granger causality in the mean. \\
\indent We start by describing part A (computation of the value of the test statistic). We use the sample version  $GC_{\tau}(\vZ \rightarrow X)$ of the above proposed measure, that is 

\begin{equation*}
\begin{split}
\widehat{GC}_{\tau}(\vZ \rightarrow X)&=
\log\left[ 
\frac{(T-T_0+1)^{-1}\sum_{t=T_0}^{T}\mathcal{R}_\tau(x_t-\widehat{\mu}_\tau(X_t|\, X_{t-1}=x_{t-1})}
{(T-T_0+1)^{-1}\sum_{t=T_0}^{T}\mathcal{R}_\tau(x_t-\widehat{\mu}_\tau(X_t|\, X_{t-1}=x_{t-1},\,\vZ_{t-1}=\vz_{t-1} )}
\right]
\\
&=\log\left[ 
\frac{\sum_{t=T_0}^{T}\mathcal{R}_\tau(x_t-\widehat{\mu}_\tau(X_t|\, X_{t-1}=x_{t-1})}
{\sum_{t=T_0}^{T}\mathcal{R}_\tau(x_t-\widehat{\mu}_\tau(X_t|\, X_{t-1}=x_{t-1},\, \vZ_{t-1}=\vz_{t-1})}
\right]
\end{split}
\end{equation*}

where the expectiles $\widehat{\mu}_\tau(X_t|\, X_{t-1}=x_{t-1})$ and $\widehat{\mu}_\tau(X_t|\, X_{t-1}=x_{t-1},\,\vZ_{t-1}=\vz_{t-1})$ are computed fitting an M-vine copula model to the observed sample and using it to generate i.i.d. observations of $X_t$ given $X_{t-1}=x_{t-1}$ and of $X_t$ given $X_{t-1}=x_{t-1}$ and $\vZ_{t-1}=\vz_{t-1}$, and using these generated samples to compute the corresponding expectiles empirically. More precisely, we proceed as follows:

\begin{itemize}
\item[Step A1)] we fit an M-vine copula model and estimate its respective parameters for the observations from the series $X$, that is, for  $M_{x}=(x_t)_{t=1\dots,T}$, by means of the selection and estimation procedure introduced and studied in \cite{nagler2022};\\
\item[Step A2)] using the model obtained from Step A1), for each $t=T_0,\dots, T$,  we 
generate $N$ i.i.d. predictions $\{\tilde{x}^{M_{x}}_{i,t}:\, i=1,\dots,N\}$ of $X_t$ given $X_{t-1}=x_{t-1}$, so that we can compute the empirical conditional $\tau$-th expectile of $X_t$ given $X_{t-1}=x_{t-1}$ by the sample $\{\tilde{x}^{M_{x}}_{i,t}\}_{i=1,\dots,N}$ , that is we set 
\begin{equation*}
\widehat{\mu}_\tau(X_t|\, X_{t-1}=x_{t-1})=
\widehat{\mu}_{\tau,N}(X_t|\, X_{t-1}=x_{t-1})
= \arg\min_m\sum_{i=1}^N \mathcal{R}_\tau(\tilde{x}^{M_{x}}_{i,t}-m)\,;
\end{equation*}
\item[Step A3)] repeat Step A1) and Step A2) for the observations from the series $(X,\vZ)$, 
that is for  $M_{x\vz}=(x_t,\vz_t)_{t=1\dots,T}$, in order to obtain 
\begin{equation*}
\begin{split}
\widehat{\mu}_\tau(X_t|\, X_{t-1}=x_{t-1},\,\vZ_{t-1}=\vz_{t-1})&= 
\widehat{\mu}_{\tau,N}(X_t|\, X_{t-1}=x_{t-1},\,\vZ_{t-1}=\vz_{t-1})\\
&=
\arg\min_m\sum_{i=1}^N \mathcal{R}_\tau(\tilde{x}^{M_{x\vz}}_{i,t}-m)\,.
\end{split}
\end{equation*}
\end{itemize}

\begin{remark} \rm Note that Theorem~\ref{th-strong-consistency} assures that the empirical quantity 
$\widehat{\mu}_\tau(X_t|\, X_{t-1}=x_{t-1})$ converges a.s. (for $N\to+\infty$) toward the ``pseudo-true'' conditional 
$\tau$-expectile of $X_t$ given $X_{t-1}=x_{t-1}$, that is 
toward the conditional $\tau$-expectile of $X_t$ given $X_{t-1}=x_{t-1}$  
with respect to the model obtained from Step A1. Indeed, it is enough to apply 
Theorem~\ref{th-strong-consistency} taking $X$ with distribution equal to the conditional distribution of $X_t$ given $X_{t-1}=x_{t-1}$ determined by the model obtained from step A1), so that $\{\tilde{x}^{M_{x}}_{i,t}: i=1,\dots,N\}$ is a realization of the random variables $X_1,\dots,X_N$ in the theorem and 
$\widehat\mu_\tau(X_t|X_{t-1}=x_{t-1})$ coincides with 
$\hat\mu_{\tau,N}(\tilde{x}^{M_{x}}_{1,t}, \dots,\tilde{x}^{M_{x}}_{N,t})$ with $\hat\mu_{\tau,N}(x_1,\dots,x_N)$ defined in the theorem. 
Similar arguments hold true for $\widehat{\mu}_\tau(X_t|\, X_{t-1}=x_{t-1},\, \vZ_{t-1}=\vz_{t-1})$. Hence, assuming that the fitted models of step A1 are the true ones, by Corollary~\ref{cor-1} and the continuity of the $\log(\cdot)$ on $(0,+\infty)$, we have that, for each fixed $T_0$ and $T$,
\begin{equation*}
\begin{split}
&\log\left[ 
\frac{\sum_{t=T_0}^{T}\mathcal{R}_\tau(X_t-\widehat{\mu}_{\tau,N}(X_t|\, X_{t-1})}
{\sum_{t=T_0}^{T}\mathcal{R}_\tau(X_t-\widehat{\mu}_{\tau,N}(X_t|\, X_{t-1},\, \vZ_{t-1})}
\right]
 \stackrel{a.s.}\longrightarrow \\
& \log\left[ 
\frac{\sum_{t=T_0}^{T}\mathcal{R}_\tau(X_t-\mu_\tau(X_t|\, X_{t-1})}
{\sum_{t=T_0}^{T}\mathcal{R}_\tau(X_t-\mu_\tau(X_t|\, X_{t-1},\, \vZ_{t-1})}
\right]\quad\mbox{for } N\to +\infty\,.
\end{split}
\end{equation*}
Moreover, assuming that the fitted models are also ergodic, by Corollary~\ref{cor-2} and again the continuity of the $\log(\cdot)$ on $(0,+\infty)$, 
the sample version $\widehat{GC}_\tau(\vZ\to X)$ results (taking first $T\to +\infty$ and then $N\to +\infty$) a strongly consistent estimator 
of $GC_\tau(\vZ\to X)$.
\end{remark}

Roughly speaking, $\widehat{GC}_{\tau}(\vZ \rightarrow X)$, measures the log-difference between the mean $\tau$-expectile loss related to the model 
with an M-vine copula structure fitted only on the information set from $X$,  
and the mean $\tau$-th expectile loss computed with the M-vine copula model fitted to the entire sample 
from both $X$ and $\vZ$, that is  using the information set of all series. 
In line with the definition of Granger causality in $\tau$-th expectiles, if this estimated quantity is significantly 
higher than zero, we reject the null hypothesis of no Granger causality (in the $\tau$-th expectile) from $\vZ$ to $X$: 
indeed, under the null hypothesis, the measure $GC_{\tau}(\vZ \rightarrow X)$ is zero, and so, the higher is the value of the 
statistics $\widehat{GC}_{\tau}(\vZ \rightarrow X)$, the more statistically significant is the evidence of the presence of 
Granger causality (in the $\tau$-th expectile) running from $\vZ$ to $X$. Regarding $T_0\geq k+1$, 
we can choose it by taking into account the goodness of fit of the models to the data. 
\\

\indent In order to test if $\widehat{GC}_{\tau}(\vZ \rightarrow X)$ is statistically greater than zero, we rely, as in \cite{jang2022}, on a method, which takes 
inspiration from \cite{paparoditis2000} and also used in \cite{fuentesmartinez2025}.  
This method relies on the M-vine copula model fitted on the entire sample $M_{x\vz}=\{(x_t,\vz_t):t=1,\dots,T\}$ in order to 
generate independent samples under the null hypothesis of no Granger causality (in the $\tau$-th expectile) that have the same dependence structure as the original sample, and use them for computing the $p-$value for the test. Previous literature has dealt with these methods in order to simulate exclusively bivariate series, hence, we propose an extension of it to the multivariate case in which we allow for $d+1$ series. Given the presence of more than two series, the methodology is not as straightforward as in the original case in which all the required copulas are present in the first tree of the M-vine structure. Therefore, the multivariate case requires to proceed sequentially using the first $d$ trees of the corresponding vine. More precisely, the part B (computation of the $p$-value) of the proposed procedure works as follows:
\begin{itemize}  
\item[Step B1)]  from the first tree of the obtained M-vine copula structure, extract, for each $t=1,\dots, T$, both the copula between
 $X_t$ and $X_{t+1}$, say $c_{X_t,X_{t+1}}$, related to the conditional distribution of $X_{t+1}$ given $X_t$, 
 and the copula between $X_t$ and $Z_{1,t}$, say $c_{X_t,Z_{1,t}}$, related to the conditional distribution of $Z_{1,t}$ given $X_t$;
\item[Step B2)] using the estimated marginal distribution $\widehat{F}_X$, generate $x_t^0$, and, conditional on this value, 
 draw $x_{t+1}^0$ from $c_{X_t,X_{t+1}}$; 
\item[Step B3)] using also the estimated marginal distribution $\widehat{F}_{Z_1}$,  
  conditional on $x_t^0$, draw $z_{1,t}^0$ from $c_{X_t,Z_{1,t}}$; 

\item[Step B4)] from Tree $i$, for $i = 2, \dots, d$, extract the conditional copula 
$c_{X_t, Z_{i,t};\, Z_{1,t},\dots,Z_{i-1,t}}$; then, using the estimated marginal 
distributions $\widehat{F}_{Z_2}, \dots, \widehat{F}_{Z_d}$, draw sequentially 
$z_{i,t}^0$ from the conditional distribution of $Z_{i,t}$ given 
$x_t^0, z_{1,t}^0, \dots, z_{i-1,t}^0$, for $i = 1, \dots, d$, yielding the full 
simulated vector $\mathbf{z}_t^0 = (z_{1,t}^0, \dots, z_{d,t}^0)$;

\item[Step B5)] using this generated sample $\{(x_t^0,\vz_t^0):t=1,\dots,T\}$, 
compute the quantity $\widehat{GC}^{0}_\tau(\vZ\rightarrow X)$, that gives a simulated value of the test statistic under the null hypothesis; 
  
\item[Step B6)] repeat the above steps $B$ times, so that we get $B$ independent simulated values under the null hypothesis:   
$\widehat{GC}^{0}_{\tau\;j}(\vZ\rightarrow X)$ for $j=1,\dots,B$;
\item[Step B7)] compute the $p$-value for the test by the empirical mean
\begin{equation*}
p=\frac{1}{B}\sum_{j=1}^B \mathbb{1}(\widehat{GC}^{0}_{\tau\;j}(\vZ\rightarrow X) \geq \widehat{GC}_\tau(\vZ\rightarrow X)).
\end{equation*}
\end{itemize}
Therefore, we reject the null hypothesis of no Granger causality (in the $\tau$-th expectile) when $p<\alpha$, where $\alpha$ is a given significance level.

\section{Simulation Study}\label{simstudy}

In order to analyze the finite sample properties of the proposed M-vine test for Granger causality in expectiles in terms of size and power, we perform a simulation study based on two size assessment models and four power assessment models. For the selected data generating processes (DGP's) we set $d=2$, i.e., we work with $d+1=3$ series. In order to ease both notation and exposition of the DGP's, we will denote the considered multivariate time series by $(X_t,Y_t,Z_t)_t$ and we will study the Granger causality in expectiles from $(Y,Z)\rightarrow X$. We simulate $S=500$ Monte Carlo replications for each model with sample sizes $T\in\{100,200,500\}$ and compute the empirical size and power of our test using a predefined significance level of $\alpha=0.05$. Furthermore we set $\tau\in\{0.1,0.5,0.9\}$, in order to study the statistical properties of our test both in the mean and at the tails of the distribution. In terms of the specification of our test, for Part A we perform $N=200$ predictions for each series and set $T_0=\frac{T}{2}$, whilst for part B, we work with $B=200$ generated samples under the null hypothesis. We begin by introducing the two size assessment models in our simulation study. 
\\

\textbf{\underline{Size assessment models}}\\
\begin{flalign*}
& \begin{aligned}[t]
\makebox[2em][l]{\textbf{S1}}\; X_t &= 0.5 X_{t-1} + \eta_{x,t}, \qquad Y_t = 0.5 Y_{t-1} + \eta_{y,t}, \qquad Z_t = 0.5 Z_{t-1} + \eta_{z,t},
\end{aligned} &&
\end{flalign*}
where $(\eta_{x,t})_t, (\eta_{y,t})_t, (\eta_{z,t})_t$ are three independent white Gaussian noises.

\medskip
\begin{flalign*}
& \begin{aligned}[t]
\makebox[2em][l]{\textbf{S2}}\; X_t &= 0.05 + \varepsilon_{x,t}, \qquad Y_t = 0.05 + \varepsilon_{y,t}, \qquad Z_t = 0.05 + \varepsilon_{z,t}, \\
\varepsilon_{i,t} &= \sigma_{i,t}\,z_{i,t}, \\
\sigma_{i,t}^2 &= 0.01 + 0.08\,\varepsilon_{i,t-1}^2 + 0.87\,\sigma_{i,t-1}^2, \qquad i \in \{x,y,z\},
\end{aligned} &&
\end{flalign*}
where $(z_{x,t})_t, (z_{y,t})_t, (z_{z,t})_t$ are three independent sequences of i.i.d. random variables distributed as the standardized skewed Student's t distribution~$\text{sstd}(0,1,\nu=5,\xi=-1.5)$.\\

Given that these models are employed to analyze the size of the test, they are specified such that the absence of Granger causality from $(Y,Z)\to X$ is established by construction. On the one hand, S1 represents the case of three independent stationary AR(1) processes with the same autoregressive coefficient and standard normal innovations. On the other hand, S2 has three independent GARCH(1,1) processes with the same parameter specification and independent innovations drawn from a standardized skewed Student's t distribution with 5 degrees of freedom and a skewness parameter of $-1.5$. Model S2 serves to emulate the behaviour of data from financial returns by inducing heavy tails and negative skewness in the innovations, which are well-known stylized facts in the literature related to financial returns \citep{cont2001}.

	\begin{table}[H]
		\centering
		\begin{tabular}{>{\centering\hspace{0pt}}m{0.179\linewidth}>{\centering\hspace{0pt}}m{0.225\linewidth}>{\centering\hspace{0pt}}m{0.225\linewidth}>{\centering\arraybackslash\hspace{0pt}}m{0.225\linewidth}}
			\toprule
			\textbf{DGP} & $T=100$ & $T=200$ & $T=500$ \\
			\midrule
			\multicolumn{4}{c}{$\boldsymbol{\tau=0.1}$} \\
			\midrule
			S1 & 0.056 & 0.062 & 0.056 \\
			S2 & 0.056 & 0.040 & 0.046 \\
			\midrule
			\multicolumn{4}{c}{$\boldsymbol{\tau=0.5}$} \\
			\midrule
			S1 & 0.042 & 0.040 & 0.046 \\
			S2 & 0.056 & 0.042 & 0.052 \\
			\midrule
			\multicolumn{4}{c}{$\boldsymbol{\tau=0.9}$} \\
			\midrule
			S1 & 0.064 & 0.050 & 0.042 \\
			S2 & 0.060 & 0.042 & 0.044 \\
			\bottomrule
		\end{tabular}
        \caption{Empirical size of the proposed M-vine Granger causality in expectiles test for $(Y,Z)\to X$ in each size assessment model.}
         \label{sizeres}
	\end{table}

Table~\ref{sizeres} presents the results of the empirical size assessment of the proposed test for both S1 and S2. Noticeably, for both DGP's the test manages to control the size around the predefined level of $\alpha=0.05$, meaning that the proposed M-vine test has a low probability of incorrectly rejecting the null hypothesis of no Granger causality in the $\tau$-expectile. Moreover, this conclusion holds for every considered value of $\tau$ and for every considered length $T$ of the series. For analyzing the power of the proposed test, we now introduce the power assessment models that we are going to employ.\\

\textbf{\underline{Power assessment models}}\\
\begin{flalign*}
& \begin{aligned}[t]
\makebox[2em][l]{\textbf{P1}}\; X_t &= 0.5 X_{t-1} + 0.2 Y_{t-1} + 0.2 Z_{t-1} + \eta_{x,t}, \\
Y_t &= 0.5 Y_{t-1} + \eta_{y,t}, \qquad Z_t = 0.5 Z_{t-1} + \eta_{z,t},
\end{aligned} &&
\end{flalign*}
where $(\eta_{x,t})_t, (\eta_{y,t})_t, (\eta_{z,t})_t$ are three independent white Gaussian noises. 

\medskip
\begin{flalign*}
& \begin{aligned}[t]
\makebox[2em][l]{\textbf{P2}}\; X_t &= 0.5 X_{t-1} + 5 Y_{t-1} Z_{t-1} + \eta_{x,t}, \\
Y_t &= 0.25 Y_{t-1} + \eta_{y,t}, \qquad Z_t = 0.25 Z_{t-1} + \eta_{z,t},
\end{aligned} &&
\end{flalign*}
where $(\eta_{x,t})_t, (\eta_{y,t})_t, (\eta_{z,t})_t$ are three independent white Gaussian noises. 

\medskip
\begin{flalign*}
& \begin{aligned}[t]
\makebox[2em][l]{\textbf{P3}}\; X_t &= 0.5 X_{t-1} + 5 Y_{t-1} Z_{t-1} + \eta_{x,t}, \\
Y_t &= \eta_{y,t}, \qquad Z_t = \eta_{z,t},
\end{aligned} &&
\end{flalign*}
where $(\eta_{x,t})_t, (\eta_{y,t})_t, (\eta_{z,t})_t$ are three independent white Gaussian noises. 

\medskip
\begin{flalign*}
& \begin{aligned}[t]
\makebox[2em][l]{\textbf{P4}}\; X_t &= 0.5 X_{t-1} + 2.5 Y_{t-1} Z_{t-1} + \varepsilon_{x,t}, \\
Y_t &= \varepsilon_{y,t}, \qquad Z_t = \varepsilon_{z,t}, \\
\varepsilon_{i,t} &= \sigma_{i,t}\,z_{i,t}, \\
\sigma_{i,t}^2 &= 0.01 + 0.08\,\varepsilon_{i,t-1}^2 + 0.87\,\sigma_{i,t-1}^2, \qquad i \in \{x,y,z\},
\end{aligned} &&
\end{flalign*}
where $(z_{x,t})_t, (z_{y,t})_t, (z_{z,t})_t$ are three independent sequences of i.i.d. random variables distributed as the standardized skewed Student's t distribution~$\text{sstd}(0,1,\nu=5,\xi=-1.5)$.\\

    These four DGP's are built so that there is joint Granger causality $(Y,Z)\to X$ and with or without single Granger causality $Y\to X$ and $Z\to X$. 
    Model P1 has a similar structure as S1, $Y_t$ and $Z_t$ are still independent AR(1) processes, whereas $X_t$ does not only depend linearly on its own lag, but also on the ones from $Y_t$ and $Z_t$. In P2, $Y_t$ and $Z_t$ are AR(1) processes with smaller autoregressive coefficient than in P1, 
    but it represents a more complex type of dependence driven by the interaction term in $X_t$ that involves the lags of $Y_t$ and $Z_t$. Model P3 is relatively similar to P2 with the only difference that $Y_t$ and $Z_t$ are both purely Gaussian white noises. This subtle difference makes P3 a compelling model, as one can analytically  show that there is no Granger causality in the mean from $Y\rightarrow X$ nor from $Z\rightarrow X$, however, there is joint Granger causality in the mean from $(Y,Z)\rightarrow X$ (for a formal proof of these facts, we refer to Appendix~\ref{app:gcprof}. See also Appendix~\ref{app:gcprof2}). Consequently, tests for pairwise Granger causality in the mean would correctly lead to the no-rejection of the null hypothesis of no Granger causality in the  mean (e.g. Table~\ref{p3sims} in Appendix~\ref{app:gcprof}), but, differently from multivariate tests, they could not be applied to detect the presence of the joint Granger causality. Lastly, P4 is built to analyze the power of the  test with simulated data that behaves as series of financial returns: specifically, it resembles S2 but in the case that $X$ depends on the lags of both $Y_t$ and $Z_t$ through a non-linear term.
\\

	\begin{table}[H]
		\centering
		\footnotesize
		\begin{tabular}{>{\centering\hspace{0pt}}m{0.179\linewidth}>{\centering\hspace{0pt}}m{0.225\linewidth}>{\centering\hspace{0pt}}m{0.225\linewidth}>{\centering\arraybackslash\hspace{0pt}}m{0.225\linewidth}}
			\toprule
			\textbf{DGP} & $T=100$ & $T=200$ & $T=500$ \\
			\midrule
			\multicolumn{4}{c}{$\boldsymbol{\tau=0.1}$} \\
			\midrule
			P1 & 0.420 & 0.586 & 0.882 \\
			P2 & 0.640 & 0.924 & 0.998 \\
			P3 & 0.676 & 0.934 & 1.000 \\
			P4 & 0.796 & 0.908 & 0.982 \\
			\midrule
			\multicolumn{4}{c}{$\boldsymbol{\tau=0.5}$} \\
			\midrule
			P1 & 0.516 & 0.724 & 0.978 \\
			P2 & 0.288 & 0.594 & 0.942 \\
			P3 & 0.296 & 0.610 & 0.972 \\
			P4 & 0.944 & 0.988 & 1.000 \\
			\midrule
			\multicolumn{4}{c}{$\boldsymbol{\tau=0.9}$} \\
			\midrule
			P1 & 0.412 & 0.624 & 0.886 \\
			P2 & 0.652 & 0.934 & 1.000 \\
			P3 & 0.672 & 0.958 & 1.000 \\
			P4 & 0.900 & 0.986 & 1.000 \\
			\bottomrule
		\end{tabular}
        \caption{Empirical power of the proposed M-vine Granger causality in expectiles test for $(Y,Z)\to X$ in each power assessment model.}
        \label{powerres}
	\end{table}

The results for the empirical power across the four DGP's, three expectile levels, and three sample sizes are exhibited in Table~\ref{powerres}. The first noticeable pattern is that for each DGP and expectile level, the power is monotonically increasing as $T$ grows, explicitly illustrating the consistency of the M-vine test. In fact, for $T=500$ the test reaches the ideal power in most dependence scenarios and remarkably, even for $T=200$ the test has a considerably high power in some DGP's. The process for which our test has the lowest power in $T=500$ is P1. This result has two implications. On the one hand, it shows the sensitivity of the proposed test to the size of the autoregressive coefficients on the lags of the causing series. This has been well-documented for other tests for Granger causality, for instance in the simulation study of \cite{bouezmarni2024}. On the other hand, this is an example of a scenario where the single autoregressive coefficients for each causing series might be small enough so that, in practice with a finite sample size, they lead to pairwise tests not rejecting the null hypothesis of no Granger causality. Therefore, the proposed test provides a tool that can address such settings where the pairwise dependencies are too weak to be detected by  pairwise tests, but the joint influence is strong enough that our test can catch it and distinguish these cases from those in which there is indeed no Granger causality. To illustrate this, under the P1 specification, we compare the power of the proposed M-vine test for $(Y,Z)\to X$ with the ones of our pairwise M-vine Granger causality in expectiles tests and with the ones of the Granger causality  in quantiles tests from \cite{balcilar2016}\footnote{\cite{balcilar2016} introduces a Kernel-based Non-Parametric (KNP) test for Granger causality in quantiles that is able to test for individual causality between series. Using the R package ‘\textit{nonParQuantileCausality}’, we implement this test for each direction: $Y\to X$ and $Z\to X$.}. 

\begin{table}[H]
		\centering
		\footnotesize
		\begin{tabular}{>{\centering\hspace{0pt}}m{0.25\linewidth}>{\centering\hspace{0pt}}m{0.225\linewidth}>{\centering\hspace{0pt}}m{0.225\linewidth}>{\centering\arraybackslash\hspace{0pt}}m{0.225\linewidth}}
			\toprule
			\textbf{Test} & $T=100$ & $T=200$ & $T=500$ \\
			\midrule
			\multicolumn{4}{c}{$\boldsymbol{\tau=0.1}$} \\
			\midrule
			M-vine $(Y,Z)\to X$ & 0.420 & 0.586 & 0.882 \\
            \midrule
            M-vine  $Y \to X$ & 0.196 & 0.356 & 0.650 \\
			M-vine  $Z \to X$ & 0.280 & 0.388 & 0.530 \\
			KNP  $Y \to X$ & 0.041 & 0.048 & 0.084 \\
			KNP  $Z \to X$ & 0.033 & 0.040 & 0.094 \\
			\midrule
			\multicolumn{4}{c}{$\boldsymbol{\tau=0.5}$} \\
			\midrule
			M-vine $(Y,Z)\to X$ & 0.516 & 0.724 & 0.978 \\
            \midrule
            M-vine  $Y \to X$ & 0.300 & 0.468 & 0.740 \\
			M-vine  $Z \to X$ & 0.344 & 0.464 & 0.720 \\
			KNP  $Y \to X$ & 0.154 & 0.340 & 0.709 \\
			KNP $Z \to X$ & 0.180 & 0.344 & 0.705 \\
			\midrule
			\multicolumn{4}{c}{$\boldsymbol{\tau=0.9}$} \\
			\midrule
			M-vine $(Y,Z)\to X$  & 0.412 & 0.624 & 0.886 \\
            \midrule 
            M-vine  $Y \to X$ & 0.264 & 0.388 & 0.620 \\
			M-vine  $Z \to X$ & 0.268 & 0.352 & 0.560 \\
			KNP $Y \to X$ & 0.000 & 0.000 & 0.082 \\
			KNP $Z \to X$ & 0.000 & 0.014 & 0.052 \\
			\bottomrule
		\end{tabular}
        \caption{Empirical power comparison for model P1: M-vine Granger causality in expectiles test for $(Y,Z)\to X$ versus the pairwise 
        M-vine Granger causality in expectiles test and versus the 
        Kernel-based Non-Parametric (KNP) tests for Granger causality in quantiles of \cite{balcilar2016}. Number of simulations: $S = 500$.}
        \label{powercomp}
	\end{table}

Table~\ref{powercomp} exhibits the results from the power comparison between the three tests. When $T=500$ and $\tau=0.5$ (which means Granger causality in the mean for the M-vine test and in the median for the quantile-based test), the proposed joint test has a notably higher power. In fact, our joint test reaches an almost ideal power, whereas the power of the pairwise tests are in the vicinity of $0.70$. When we move into the tails, the joint version of our test maintains power relatively closer to $0.90$ for $T=500$. The pairwise M-vine test has a power that increases with sample size also for extreme values of $\tau$, keeping the same pattern as in the case of the mean (although with lower values). In contrast, KNP test exhibits a power drastically close to or below $0.05$ (always below $0.10$) across all sample sizes, meaning that it is failing to detect a causal relationship that is present by construction.

\indent In summary, these results highlight the two aforementioned merits of the proposed methodology. Firstly, the joint nature of the proposed test captures the combined predictive contribution of $Y$ and $Z$, whereas the pairwise tests yield considerably weaker evidence as the single dependencies are too small to be detected individually. Secondly, as shown from the comparison between the pairwise tests, the copula-based estimation tends to be more efficient than the kernel-based counterpart when working with finite samples, especially when focusing on the tails where copulas excel at modeling and capturing dependencies.
\\

\subsection{Granger causality in the mean (i.e.~$\tau=1/2$) for $(Y,Z)\to X$: \\comparison with the classic Granger causality F-test} 

We repeat the simulation study of the previous section using the linear Granger causality in the mean F-test for $(Y,Z)\to X$, in order to compare its statistical properties with the ones of the M-vine test with $\tau=1/2$. Table~\ref{linear_granger} reports the empirical size and the power of the above mentioned linear test for $(Y,Z)\to X$. Regarding the size, the linear test controls the size around the nominal level when $T\geq 200$, similarly to the proposed M-vine test (compare the ``size part'' of Table~\ref{linear_granger} with Table~\ref{sizeres}). In terms of power, the results reveal evident differences between linear and non-linear settings. For the linear DGP P1, the F-test achieves an almost ideal power even with moderate sample sizes, showing a faster rate of convergence than the M-vine test with $\tau=1/2$. This is not a surprising result at all, given that the F-test is built on such linear models. However, for DGPs P2, P3, and P4, where the dependence structure is non-linear, the power of the F-test remains very low for every sample size, failing to increase with $T$. In contrast, the power of the M-vine test increases consistently with $T$ across all power assessment models, reaching near-ideal levels at $T=500$ (compare the ``power part'' of Table~\ref{linear_granger} with Table~\ref{powerres}).

\begin{table}[H]
	\centering
	\footnotesize
	\begin{tabular}{>{\centering\hspace{0pt}}m{0.179\linewidth}>{\centering\hspace{0pt}}m{0.225\linewidth}>{\centering\hspace{0pt}}m{0.225\linewidth}>{\centering\arraybackslash\hspace{0pt}}m{0.225\linewidth}}
		\toprule
		\textbf{DGP} & $T=100$ & $T=200$ & $T=500$ \\
		\midrule
		\multicolumn{4}{c}{\textbf{Size}} \\
		\midrule
		S1 & 0.028 & 0.052 & 0.060 \\
		S2 & 0.042 & 0.062 & 0.060 \\
		\midrule
		\multicolumn{4}{c}{\textbf{Power}} \\
		\midrule
		P1 & 0.810 & 0.988 & 1.000 \\
		P2 & 0.454 & 0.438 & 0.448 \\
		P3 & 0.350 & 0.338 & 0.340 \\
		P4 & 0.388 & 0.410 & 0.500 \\
		\bottomrule
	\end{tabular}
	\caption{Empirical size and power of the classical linear Granger causality in the mean F-test for $(Y,Z)\to X$.}
	\label{linear_granger}
\end{table}

These results confirm a very important aspect: a Granger causality test designed for linear dependence, such as the F-test, cannot detect non-linear causal relationships, whereas the proposed copula-based approach is able to capture them regardless of the underlying dependence structure (linear or non-linear).

\section{Empirical Applications}\label{empapp}

In this section, we present two applications of the proposed test in real data. We employ the M-vine Granger causality in expectiles test for studying the relationships among major stock market indices at both the global and the Asian regional level. Regarding the specification of the test, we work with the same setting as in Section~\ref{simstudy}, i.e., we set $N=200$, $T_0=\frac{T}{2}$, and $B=200$. We test for Granger causality at several positions of the distribution symmetrically around the mean by setting $\tau\in\{0.05,0.1,0.25,0.5,0.75,0.9,0.95\}$. For each application, we first perform the proposed test between all possible pairs of indices to assess individual Granger causality, and then we allow for pairs of indices to jointly cause an individual index, exploiting the multivariate feature of the tool. From the first application, we found that S\&P 500 has a pivotal role in terms of transmitting causality to both the FTSE 100 and the Nikkei 225 across the entire distribution, which comes as an expected result given the international relevance of the U.S. market. Moreover, our expectile-based test reveals that the FTSE causes the S\&P 500 in the sense of Granger only in the tails of the distribution, a pattern that would be invisible to a purely mean-based framework. Whereas in the second application, we find a more novel result: specifically, the test reveals that there is no Granger causality between pairs of the three Asian indices, but the Nikkei and the Shanghai Composite jointly Granger cause the Hang Seng Index in its left tail, empirically illustrating the advantage of the multivariate nature of the proposed test.

 \subsection{Causality among Global Stock Markets}

As first application, we study Granger causality among three globally relevant stock markets: the U.S., the U.K., and Japan. To this end, we work with data from their corresponding indices: the S\&P 500 (SP), the FTSE 100 (FTSE), and the Nikkei 225 (NK). The data consists of daily observations between September 2012 and October 2014 retrieved from Yahoo Finance. Notice that the length of the sample is $T=500$ in order to be consistent with the simulation study in Section~\ref{simstudy}. In order to make the data stationary, we transform stock prices into log-returns as $r_t=(\log(s_t)-\log(s_{t-1}))*100$, where $s_t$ is the stock price at time $t$.\\
	
\begin{table}[H]
	\centering
	\resizebox{\textwidth}{!}{%
		\footnotesize
		\begin{tabular}{lcccccccc}
			\toprule
			& \multicolumn{7}{c}{\textbf{$\tau$}} \\
			\cmidrule{2-8}
			\textbf{Direction} & 0.05 & 0.10 & 0.25 & 0.50 & 0.75 & 0.90 & 0.95 \\
			\midrule
			SP $\rightarrow$ FTSE   & 0.000$^{***}$ & 0.000$^{***}$ & 0.003$^{***}$ & 0.000$^{***}$ & 0.000$^{***}$ & 0.000$^{***}$ & 0.000$^{***}$ \\
			SP $\rightarrow$ NK     & 0.000$^{***}$ & 0.000$^{***}$ & 0.000$^{***}$ & 0.000$^{***}$ & 0.000$^{***}$ & 0.000$^{***}$ & 0.000$^{***}$ \\
			\midrule
			FTSE $\rightarrow$ SP   & 0.030$^{**}$\phantom{$^{*}$}  & 0.003$^{***}$ & 0.150\phantom{$^{***}$}  & 0.290\phantom{$^{***}$} & 0.500\phantom{$^{***}$} & 0.035$^{**}$\phantom{$^{*}$} & 0.025$^{**}$\phantom{$^{*}$}  \\
			FTSE $\rightarrow$ NK   & 0.000$^{***}$ & 0.000$^{***}$ & 0.000$^{***}$ & 0.000$^{***}$ & 0.000$^{***}$ & 0.000$^{***}$ & 0.000$^{***}$ \\
			\midrule
			NK $\rightarrow$ SP     & 0.340\phantom{$^{***}$} & 0.453\phantom{$^{***}$} & 0.470\phantom{$^{***}$}  & 0.525\phantom{$^{***}$}  & 0.248\phantom{$^{***}$} & 0.575\phantom{$^{***}$} & 0.233\phantom{$^{***}$} \\
			NK $\rightarrow$ FTSE   & 0.260\phantom{$^{***}$} & 0.472\phantom{$^{***}$} & 0.165\phantom{$^{***}$} & 0.145\phantom{$^{***}$} & 0.295\phantom{$^{***}$} & 0.470\phantom{$^{***}$} & 0.123\phantom{$^{***}$} \\
			\midrule
			(FTSE, NK) $\rightarrow$ SP  & 0.008$^{***}$ & 0.001$^{***}$& 0.135\phantom{$^{***}$} & 0.340\phantom{$^{***}$}& 0.128\phantom{$^{***}$} & 0.030$^{**}$\phantom{$^{*}$} & 0.022$^{**}$\phantom{$^{*}$}  \\
			(SP, NK) $\rightarrow$ FTSE   & 0.000$^{***}$ & 0.000$^{***}$ & 0.000$^{***}$ & 0.000$^{***}$ & 0.000$^{***}$ & 0.000$^{***}$ & 0.000$^{***}$ \\
			(SP, FTSE) $\rightarrow$ NK   & 0.000$^{***}$ & 0.000$^{***}$ & 0.000$^{***}$ & 0.000$^{***}$ & 0.000$^{***}$ & 0.000$^{***}$ & 0.000$^{***}$ \\
			\bottomrule
	\end{tabular}}
	\caption{$p$-values for the Granger causality in expectiles test between FTSE, NK and SP. $^{*}p<0.10$, $^{**}p<0.05$, $^{***}p<0.01$.}
    \label{appres}
\end{table}

	From the first panel of Table~\ref{appres}, the results show that the S\&P500 causes both the FTSE and NK individually across the entire distribution. This reflects the fact that among the three markets, the U.S. stock market is the one with the highest relevance worldwide, and therefore, acts as a transmitter of causality to the others. The second panel illustrates that the FTSE also causes the NK in the sense of Granger across most of the distribution, but to a lower extent than the S\&P500. Furthermore, it shows that the U.K. stock market has no predictive power in regular scenarios around the mean, however, extreme returns from the FTSE can help in predicting extreme movements in the S\&P500, as reflected by this symmetric pattern of Granger causality in both tails of the distribution. The third panel shows that the Japanese stock market acts purely as a receiver of causality, as it does not cause any of the other two indices at any part of the distribution. This illustrates how the Japanese market lacks predictive power for the other two indices that are arguably more relevant at a global level. Finally, the last panel of Table~\ref{appres} exhibits the results for joint Granger causality. One can notice that the pairs (SP,NK) and (SP,FTSE) cause the FTSE and NK, respectively, having a statistically significant improvement in their prediction across all values of $\tau$. This comes as an expected result as the S\&P500 causes each of these indices individually, hence, paired with another index the conclusion must hold. A similar argument applies to the pair (FTSE,NK) causing the S\&P500. The U.K. stock market already has enough predictive power in the tails of the distribution of the U.S. market by itself, then the addition of information stemming from the Japanese market, whose prediction power was already shown to be negligible across most of the distribution, does not change the results.\\

    Overall, these results present a coherent picture of global stock market interdependence in which the S\&P500 occupies a pivotal transmitting role across the entire return distribution, while NK acts mostly as a receiver. The tail-specific nature of the causality flow from FTSE to S\&P500, and the joint tail causality from (FTSE, NK) to S\&P500, highlight the value of our expectile-based tool over traditional mean-based approaches in capturing the full distributional structure of international return spillovers.\\

    \subsection{Causality among Asian Stock Markets}

For this second application, we shift the focus to three of the main Asian stock markets: mainland China, Hong Kong, and Japan. We examine causality between their respective indices: the Shanghai Composite (SSE), the Hang Seng Index (HSI), and the Nikkei 225 (NK). As in the previous application, the dataset consists of daily observations between September 2012 and October 2014 retrieved from Yahoo Finance, providing the same sample size of $T=500$. We also work with log-returns in order to achieve the stationarity of the three series.\\

    	\begin{table}[H]
		\centering
		\resizebox{\textwidth}{!}{%
			\footnotesize
			\begin{tabular}{lcccccccc}
				\toprule
				& \multicolumn{7}{c}{\textbf{$\tau$}} \\
				\cmidrule{2-8}
				\textbf{Direction} & 0.05 & 0.10 & 0.25 & 0.50 & 0.75 & 0.90 & 0.95 \\
				\midrule
				NK $\rightarrow$ HSI   & 0.080$^{*}$\phantom{$^{**}$}  & 0.080$^{*}$\phantom{$^{**}$}  & 0.145\phantom{$^{***}$} & 0.470\phantom{$^{***}$} & 0.365\phantom{$^{***}$} & 0.305\phantom{$^{***}$} & 0.370\phantom{$^{***}$} \\
				NK $\rightarrow$ SSE     & 0.410\phantom{$^{***}$} &0.520\phantom{$^{***}$}  &0.325\phantom{$^{***}$}  &0.345\phantom{$^{***}$}  &0.285\phantom{$^{***}$}  &0.340\phantom{$^{***}$}  &0.425\phantom{$^{***}$}  \\
				\midrule
				HSI $\rightarrow$ NK   & 0.440\phantom{$^{***}$}  &0.275\phantom{$^{***}$}  &0.330\phantom{$^{***}$}   &0.365\phantom{$^{***}$} &0.310\phantom{$^{***}$} &0.305\phantom{$^{***}$}  &0.360\phantom{$^{***}$}  \\
				HSI $\rightarrow$ SSE   & 0.150\phantom{$^{***}$} &0.370\phantom{$^{***}$} &0.255\phantom{$^{***}$}  &0.240\phantom{$^{***}$}  &0.235\phantom{$^{***}$}  &0.515\phantom{$^{***}$}  &0.195\phantom{$^{***}$}  \\
				\midrule
				SSE $\rightarrow$ NK     & 0.445\phantom{$^{***}$} & 0.370\phantom{$^{***}$} & 0.325\phantom{$^{***}$}  & 0.410\phantom{$^{***}$}  & 0.365\phantom{$^{***}$} & 0.305\phantom{$^{***}$} & 0.360\phantom{$^{***}$} \\
				SSE $\rightarrow$ HSI   & 0.340\phantom{$^{***}$} & 0.155\phantom{$^{***}$} & 0.260\phantom{$^{***}$} & 0.535\phantom{$^{***}$} & 0.450\phantom{$^{***}$} & 0.325\phantom{$^{***}$} & 0.275\phantom{$^{***}$} \\
				\midrule
				(NK, SSE) $\rightarrow$ HSI  & 0.005$^{***}$ & 0.025$^{**}$\phantom{$^{*}$} & 0.080$^{*}$\phantom{$^{**}$} & 0.545\phantom{$^{***}$}& 0.340\phantom{$^{***}$} & 0.235\phantom{$^{***}$} & 0.195\phantom{$^{***}$}  \\
				(HSI, NK) $\rightarrow$ SSE  & 0.225\phantom{$^{***}$} & 0.395\phantom{$^{***}$} & 0.305\phantom{$^{***}$} & 0.220\phantom{$^{***}$}& 0.305\phantom{$^{***}$} & 0.385\phantom{$^{***}$} & 0.575\phantom{$^{***}$} \\
				(HSI, SSE) $\rightarrow$ NK  & 0.325\phantom{$^{***}$} & 0.200\phantom{$^{***}$} & 0.210\phantom{$^{***}$} & 0.230\phantom{$^{***}$} & 0.335\phantom{$^{***}$} & 0.270\phantom{$^{***}$} & 0.185\phantom{$^{***}$} \\
				\bottomrule
		\end{tabular}}
		\caption{$p$-values for the Granger causality in expectiles test between HSI, NK and SSE. $^{*}p<0.10$, $^{**}p<0.05$, $^{***}p<0.01$.}
		\label{appres2}
	\end{table}

	Table \ref{appres2} reports the $p$-values for Granger causality in expectiles across all pairwise and joint directions among the Nikkei 225, Hang Seng Index, and Shanghai Composite. From the first three panels, there is evident absence of Granger causality at the pairwise level. None of the six pairwise combinations exhibit significant Granger causality across the entire distribution. Consequently, a standard pairwise analysis would therefore conclude that there are no spillover effects between these three Asian stock markets. However, the joint test provides a different conclusion. While (HSI, NK) $\rightarrow$ SSE and (HSI, SSE) $\rightarrow$ NK are still statistically insignificant for all $\tau$, the direction (NK, SSE) $\rightarrow$ HSI displays a strong level of joint causality in the left tail of the distribution. In fact, the null hypothesis of no Granger causality in this direction is rejected at the 1\% significance level for $\tau=0.05$ and at the 5\% level for $\tau=0.10$, whereas no evidence of causality is detected from the mean of the distribution until its right tail. This increase in predictability from (NK, SSE) $\rightarrow$ HSI as $\tau$ decreases, is suggestive of an increase in dependence under market stress. This pattern of no individual causality but significant joint causality constitutes a clear empirical instance where standard pairwise testing misses a genuine spillover mechanism at the tails. Furthermore, this provides evidence of a real situation in which the scenario from model P1 from Section~\ref{simstudy} occurs, highlighting the importance of being able to test for joint Granger causality. From an economic standpoint, this implies that neither the NK nor the SSE provide enough information to anticipate downturns of the HSI, despite the HSI being one of the most prominent indices of this region. Nevertheless, they do so jointly, suggesting that when both mainland China and Japan are simultaneously in distress they constitute a comprehensive Asian risk signal that carries predictive power for Hong Kong's downside risk.\\

    Lastly, it is worth noting that NK $\rightarrow$ HSI exhibits a marginal significance at the 10\% level for $\tau=0.05$ and $\tau=0.10$, which despite our interpretation of being not significant, it could be construed as evidence of a weak individual Granger causality from Japan to Hong Kong at these corresponding expectile levels. Even under this interpretation, the joint test together with SSE shifts the significance level with which the null is rejected from a borderline 10\% to a clear 1\%, indicating that the Shanghai Composite index contributes complementary information that can make this predictability become fully evident.

\section{Conclusion}\label{conc}

Traditional Granger causality analysis focuses on testing whether past values of one time series contain predictive information about another, typically within a linear conditional mean framework. However, many real-world systems — particularly in finance, economics, and environmental science — exhibit complex, asymmetric, and non-linear dependence structures that cannot be fully captured by standard linear models. To address these limitations, recent research has extended Granger causality concepts to the expectile domain, providing a flexible framework for investigating causality across different parts of a conditional distribution rather than merely the mean. In this paper, by generalizing the mean measure of \cite{song2018} we introduced a model-free measure of Granger causality in expectiles, and using this novel measure we proposed a testing procedure based on the M-vine copula models from \cite{beare2015} that can account for multivariate Granger causality under non-linear and non-Gaussian settings. Under some (standard) regularity conditions, we established the strong consistency of the proposed test statistic. Furthermore, by means of a simulation study, we showed that the proposed test controls the size around the nominal significance level while having a considerably high power for a wide array of data generating processes, even for medium sample sizes. Lastly, we employed the M-vine Granger causality in expectiles test in two empirical applications testing for Granger causality between global major stock market indices and Asian stock market indices.\\

\bibliography{ms}

\pagebreak
\appendix
\section{Appendix}
\subsection{Theoretical results}
By standard arguments (e.g. \cite[Sec. 5.7]{vanderVaart1998}), we can prove the following theorem.
\begin{theorem}[Strong consistency of the empirical $\tau$-expectile]\label{th-strong-consistency}
	Let $X$ be a real random variable with ${\mathbb E}[X^2]<\infty$ and let 
    $X_1,X_2,\dots$ be i.i.d.\ real random variables with the same distribution of $X$ and fix 
    $\tau\in (0,1)$. Denote by $\mu_{\tau}$ the $\tau$-expectiles of $X$, i.e. the unique minimizer (over $m$) of 
	\[
	Q_\tau(m)={\mathbb E}\Big[{\mathcal R}_\tau(X-m)\Big]\,.
	\]
	Denote by  $\widehat{\mu}_{\tau,N}$ the empirical  
    $\tau$-expectile, i.e. the random variable $\widehat{\mu}_{\tau,N}(X_1,\dots,X_N)$, where 
    $\widehat{\mu}_{\tau,N}(x_1,\dots,x_N)$ is the unique minimizer (over $m$)  of 
	\[
	Q_{\tau,N}(x_1,\dots,x_N, m)=\frac{1}{N}\sum_{i=1}^N {\mathcal R}_\tau(x_i-m).
	\]
	Then we have $\widehat{\mu}_{\tau,N}\xrightarrow[N\to\infty]{a.s.} \mu_\tau$.
	
\end{theorem}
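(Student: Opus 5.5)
I will argue through the first-order condition, which turns the $M$-estimation problem into an $M$-estimating equation with a monotone score; this avoids any uniform law of large numbers over a non-compact parameter set. First, note that $m\mapsto \mathcal{R}_\tau(x-m)$ is convex and continuously differentiable, with derivative $-2\psi_\tau(x-m)$, where
\[
\psi_\tau(u)=\tau\, u_+ - (1-\tau)\, u_- .
\]
Since $\mathbb{E}[X^2]<\infty$, differentiation under the expectation is justified (dominated convergence, as $|\psi_\tau(X-m)|\le |X|+|m|$). Hence $Q_\tau$ is convex and $C^1$, with $Q_\tau'(m)=-2\Psi(m)$, where
\[
\Psi(m)=\mathbb{E}[\psi_\tau(X-m)].
\]
Moreover, $\psi_\tau$ is strictly increasing in $u$, because $\min(\tau,1-\tau)>0$ and $\psi_\tau(u)-\psi_\tau(v)\ge \min(\tau,1-\tau)(u-v)$ for $u>v$. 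Therefore $\Psi$ satisfies
\[
\Psi(m)-\Psi(m')\le -\min(\tau,1-\tau)(m-m')\quad\text{for } m>m',
\]
so it is continuous and strictly decreasing, with limits $\pm\infty$ at $\mp\infty$. This gives existence and uniqueness of $\mu_\tau$ as the unique zero of $\Psi$, consistent with the statement. The same argument applied to the empirical measure shows that $\widehat{\mu}_{\tau,N}$ is the unique zero of
\[
\Psi_N(m)=\frac1N\sum_{i=1}^N \psi_\tau(X_i-m),
\]
which is again strictly decreasing and continuous.

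Next, I fix $\varepsilon>0$. By strict monotonicity, $\Psi(\mu_\tau-\varepsilon)>0>\Psi(\mu_\tau+\varepsilon)$. Since $\mathbb{E}|X|<\infty$, the strong law of large numbers applies to the i.i.d. integrable variables $\psi_\tau(X_i-m)$ for each fixed $m$. Applying it at the two points $m=\mu_\tau\pm\varepsilon$, I get almost surely that $\Psi_N(\mu_\tau-\varepsilon)\to\Psi(\mu_\tau-\varepsilon)>0$ and $\Psi_N(\mu_\tau+\varepsilon)\to\Psi(\mu_\tau+\varepsilon)<0$. Hence, on an event of probability one and for all large $N$, $\Psi_N(\mu_\tau-\varepsilon)>0>\Psi_N(\mu_\tau+\varepsilon)$. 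Monotonicity of $\Psi_N$ then forces its zero into the interval, that is,
\[
|\widehat{\mu}_{\tau,N}-\mu_\tau|<\varepsilon .
\]
Taking $\varepsilon=1/j$ for $j\in\mathbb{N}$ and intersecting the countably many probability-one events yields $\widehat{\mu}_{\tau,N}\to\mu_\tau$ almost surely. This is exactly the monotone-estimating-equation argument of \cite[Lemma 5.10]{vanderVaart1998}.

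I expect the only delicate points to be the following.
\begin{itemize}
\item Justifying the interchange of derivative and expectation, and the integrability of $\psi_\tau(X-m)$. Both follow from $\mathbb{E}|X|\le(\mathbb{E}X^2)^{1/2}<\infty$.
\item Making sure that the a.s. statement at two fixed points suffices. This is where monotonicity is essential, since it replaces a uniform convergence argument.
\end{itemize}
Strictly speaking, only a finite first moment is needed for this argument. Square integrability is required only so that $Q_\tau$ itself is finite, and hence so that the minimizer characterization of $\mu_\tau$ in the statement is meaningful.
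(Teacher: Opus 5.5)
Your proof is correct and follows essentially the same route as the paper: you characterize $\mu_\tau$ and $\widehat{\mu}_{\tau,N}$ as the unique zeros of a continuous, strictly monotone score, apply the strong law of large numbers at the two points $\mu_\tau\pm\varepsilon$, and let monotonicity trap the empirical zero in $(\mu_\tau-\varepsilon,\mu_\tau+\varepsilon)$. The paper additionally proves a uniform strong law on compacts via a Lipschitz grid argument, but it remarks that the pointwise law at $\mu_\tau\pm\varepsilon$ suffices, which is exactly what you use; your observation that $\Psi$ tends to $\pm\infty$ is also a slightly more careful justification of existence than the paper's appeal to strict convexity alone.
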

\begin{proof} We split the proof in some steps. The first two steps are devoted to verify that $\mu_\tau$ and $\hat\mu_{\tau,N}$ are well-defined; while the other steps prove the strong consistency of $\hat\mu_{\tau,N}$. We begin by showing the differentiability of $m\mapsto Q_{\tau,N}(x_1,\dots,x_N,m)$ and $m\mapsto Q_\tau(m)$.
\\

	For each $x\in{\mathbb R}$, the function 
    \begin{equation*}
    \begin{split}
    m\mapsto {\mathcal R}_\tau(x-m)&=|\,\tau - \mathbb{I}_{\{(x-m)<0\}} \,|\, (x-m)^2 \\
&=\begin{cases}
\tau\,(x-m)^2\quad&\mbox{on } m\leq x\\
(1-\tau)\,(x-m)^2\quad&\mbox{on } m>x
\end{cases}
\end{split}
\end{equation*}
    is differentiable everywhere in $\mathbb R$ 	and, for each $m\in \mathbb R$, we have 
	\begin{equation}\label{eq:psi_def}
    \begin{split}
\frac{\partial}{\partial m}{\mathcal R}_\tau(x-m)	=	\psi_\tau(x,m)
		&=
		2\Big(\tau\mathbf 1_{\{(x-m)\geq 0\}}+(1-\tau)\mathbf 1_{\{(x-m)< 0 \}}\Big)(m-x)\\
        &=\begin{cases}
        2\tau\, (m-x)\quad&\mbox{if } m\leq x\\
        2(1-\tau)\,(m-x)\quad&\mbox{if } m>x\,.
        \end{cases}
        \end{split} 
	\end{equation}
    (Note that piecewise defined functions are often not differentiable in the points where the definition changes, but in the case of $m\mapsto {\mathcal R}_\tau(x-m)$, we have the term $(x-m)^2$ that makes the function differentiable also in $m=x$.) 
		It immediately follows that, for all $x_1,\dots, x_n$, the real function $m\mapsto Q_{\tau,N}$ is differentiable with derivative 
        \begin{equation}\label{def:zeta}
        Q'_{\tau,N}(m)	= z_{\tau,N}(x_1,\dots,x_N,m)=
\frac{1}{N}\sum_{i=1}^N \psi_\tau(x_i,m)\,.
        \end{equation}
        Moreover, setting 
	\[
	D_h(x,m)=\frac{{\mathcal R}_\tau(x-(m+h))-{\mathcal R}_\tau(x-m)}{h}\,,
	\]
	by the mean value theorem, we have  $D_h(x,m)=
	\psi_\tau\big(x,m+y \,h\big)$ for some $y=y(x,h) \in (0,1)$ 
	and so, for  $|h|\le 1$, we have 
	\[
	|D_h(x,m)|
	=
	\big|\psi_\tau\big(x,m + y\, h\big)\big|
	\le 2\big(|m +y\, h|+|x|\big)
	\le 2\big(|m|+1+|x|\big).
	\]
	Therefore, since $|D_h(X,m)|\leq 2(|m|+1+|X|)$ for $|h|\leq 1$ and ${\mathbb E}[\,|X|\,]<+\infty$, we can apply the dominated convergence theorem and obtain, for each $m$, 
	\[Q'_\tau(m)=
	\lim_{h\to 0} \frac{Q_\tau(m+h)-Q_\tau(m)}{h}=
	\lim_{h\to 0}{\mathbb E}[D_h(X,m)]
	=
	{\mathbb E}\!\left[\lim_{h\to 0}D_h(X,m)\right]
	=
	{\mathbb E}[\psi_\tau(X,m)]\,.
	\]
    
	We now turn to the existence, uniqueness and characterization of the minimizers. The real functions 
	 $m\mapsto Q_{\tau,n}(x_1,\dots,x_N, m)$ and $m\mapsto Q_{\tau}(m)$ are differentiable (as shown above) and strictly convex (since empirical average/expectation of strictly convex functions). Therefore, they have unique minimizers $\hat{\mu}_{\tau,N}(x_1,\dots,x_N)\in{\mathbb R}$ and $\mu_\tau\in {\mathbb R}$, which are characterized by 
     $$
     z_{\tau,N}(x_1,\dots,x_N,\hat{\mu}_{\tau,N}(x_1,\dots,x_N))=0
     \qquad\mbox{and}\qquad
     z_\tau(\mu_\tau)=0
     $$
    where $z_{\tau,N}(x_1,\dots,x_N,m)$ is defined in \eqref{def:zeta} and 
	$z_\tau(m)=Q'_\tau(m)={\mathbb E}[\psi_\tau(X,m)]$ and both of them 
    are continuous and strictly increasing functions of $m$.	
    We then set 
	\[Z_{\tau,N}(m)=
z_{\tau,N}(X_1,\dots,X_N,m)=\frac{1}{N}\sum_{i=1}^N \psi_\tau(X_i,m)
\qquad\mbox{and}\qquad 
\hat\mu_{\tau,N}=\hat{\mu}_{\tau,N}(X_1,\dots,X_N)\,. 
	\]
(Note that, since $m\mapsto z_{\tau,N}(x_1,\dots,x_N)$ is continuous and strictly increasing and, for each fixed $m$,  the function $x\mapsto \psi_\tau(x,m)$ is continuous (and strictly decreasing), we have that $\hat\mu_{\tau,N}(x_1,\dots,x_N)$ is Borel-measurable and so $\hat\mu_{\tau,N}$ is a well-defined real random variable.)\\

A key aspect for the upcoming convergence argument is the following Lipschitz property of $m\mapsto z_{\tau,N}(x_1,\dots,x_N,m)$ and $m\mapsto z_\tau(m)$. 
		For all $m,\tilde m$ and $x$, we have 
	\[
	|\psi_\tau(x,m)-\psi_\tau(x,\tilde m)|
	\leq 2|m-\tilde m|
	\]
	and so, for all $m,\tilde m$ and $x_1,\dots,x_N$, we have 
	\[
	|z_{\tau,N}(x_1,\dots,x_N,m) - z_{\tau,N}(x_1,\dots,x_N,\tilde m)|
	\leq 2\,|m-\tilde m|,
	\qquad
	|z_\tau(m) - z_\tau(\tilde m)|
	\leq 2\,|m-\tilde m|.
	\]

We can now derive the uniform strong law of large numbers on compacts for $Z_{\tau, N}$.
		 Fix any $M>0$. For all $m\in[-M,M]$ and all $x\in{\mathbb R}$, we have 
	\[
	|\psi_\tau(x,m)|\le 2(|m|+|x|)\le 2(M+|x|)\,.
	\]
Hence, since  $\mathbb{E}[\,|X|\,]<+\infty$, 
	the strong law of large numbers applies point-wise in $[-M,M]$, that is, for each fixed $m\in [-M,M]$, we have 
	\begin{equation}\label{eq:pslln}
	{Z}_{\tau,N}(m)\stackrel{a.s.}
	\longrightarrow
	z_\tau(m)
	\qquad\text{as } N\to +\infty\,.
	\end{equation}
	In addition, the above limit holds true also uniformly on $[-M,M]$. Indeed, take an arbitrary  $\delta>0$. Since $[-M,M]$ is compact,
    there exist finitely many grid points
	$m_1,\dots,m_K$ such that, for every
	$m \in [-M,M]$, there exists $k$ that satisfies $|m-m_k|\leq \delta$. Since the grid is finite and the strong law of large numbers holds true at each grid point, we get 
	\begin{equation}\label{eq:uslln-grid}
	\max_{1\leq k\leq K}
	|Z_{\tau,N}(m_k)-z_\tau(m_k)|
	\;\xrightarrow[N\to\infty]{a.s.}\; 0.
	\end{equation}
Moreover, we can write 
	\[
		|Z_{\tau,N}(m)-z_\tau(m)|
		\leq
		|Z_{\tau,N}(m)-{z}_{\tau,N}(m_k)|
		+
		|Z_{\tau,N}(m_k)-z_\tau(m_k)|
		+
		|z_\tau(m_k)-z_\tau(m)|.
	\]
Hence, using the Lipschitz property established above (with $\tilde m=m_k$ such that $|m-m_k|\leq \delta$), we find 
	\[
	|Z_{\tau,N}(m)-z_\tau(m)|
	\leq
	|Z_{\tau,N}(m_k)-z_\tau(m_k)|
	+ 4\delta\,,
	\]
	which implies 
	$$
	\sup_{m\in[-M,M]}
	|Z_{\tau,N}(m)-z_\tau(m)|
	\leq
	\max_{1\le k\le K}
	|Z_{\tau,N}(m_k)-z_\tau(m_k)|
	+ 4\delta.
	$$
	Taking $\limsup_{N\to\infty}$ and using \eqref{eq:uslln-grid}, we get 
	\[
	\limsup_{N\to\infty}
	\sup_{m\in[-M,M]}
	|Z_{\tau,N}(m)-z_\tau(m)|
	\le 4\delta
	\qquad\text{a.s.}
	\]
	Since $\delta>0$ is arbitrary, we conclude that
	\begin{equation}\label{eq:uslln}
	\sup_{m\in[-M,M]}
	|Z_{\tau,N}(m)-z_\tau(m)|
	\;\xrightarrow[N\to\infty]{a.s.}\; 0.
	\end{equation}
	
	Lastly, we combine the uniform convergence and the characterization of the minimizers to obtain the desired result. Choose $M>0$ such that $|\mu_\tau|<M$. Taking an arbitrary $\epsilon>0$
	such that $\mu_\tau\pm\epsilon\in (-M,M)$, we have $z_\tau(\mu_\tau-\epsilon)<0$ and $z_\tau(\mu_\tau+\epsilon)>0$ (since $z_\tau$ is strictly increasing with $z_\tau(\mu_\tau)=0$).
    Moreover, since
\begin{equation*}
\begin{split}
Z_{\tau,N}(\mu_\tau-\epsilon)&\leq 
|Z_{\tau,N}(\mu_\tau-\epsilon)-z_\tau(\mu_\tau-\epsilon)|+z_\tau(\mu_\tau-\epsilon)
\leq
\sup_{m\in[-M,M]}
	|Z_{\tau,N}(m)-z_\tau(m)|+z_\tau(\mu_\tau-\epsilon)\\
Z_{\tau,N}(\mu_\tau+\epsilon)&\geq 
z_\tau(\mu_\tau+\epsilon)-
	|Z_{\tau,N}(\mu_\tau+\epsilon)-z_\tau(\mu_\tau+\epsilon)|\geq
z_\tau(\mu_\tau+\epsilon)-
\sup_{m\in[-M,M]}
	|Z_{\tau,N}(m)-z_\tau(m)|\,,
\end{split}
\end{equation*}
	by means of \eqref{eq:uslln}, we get that almost surely
	\[
	Z_{\tau,N}(\mu_\tau-\epsilon)<0
	\quad\text{and}\quad
	Z_{\tau,N}(\mu_\tau+\epsilon)>0 \qquad \mbox{for sufficiently large } N.
	\]
    (Note that it would be enough to use the point-wise strong law of large numbers \eqref{eq:pslln} at the two points $\mu_\tau\pm \epsilon\in [-M,M]$, but we decided to state and apply the more elegant version \eqref{eq:uslln}.) 
	Since $\hat\mu_{\tau,N}$ is the unique zero of $m\mapsto Z_{\tau,N}(m)$ which is continuous and strictly increasing, it follows that, almost surely, the (unique) zero-point $\hat\mu_{\tau,N}$ belongs to $(\mu_\tau-\epsilon,\mu_\tau+\epsilon)$ for sufficiently large $N$, that is almost surely 
	\[
	|\widehat\mu_{\tau,N}-\mu_\tau|<\epsilon\qquad \mbox{for sufficiently large } N.
	\]
	Since $\epsilon>0$ is arbitrarily sufficiently small, we can conclude that $\widehat\mu_{\tau,N}\stackrel{a.s.}\longrightarrow \mu_\tau$.
\end{proof}

\begin{corollary}\label{cor-1} If $(Y_t)_t$ is a Markov square-integrable stationary stochastic process, then, for each fixed $T$, we have 
$$
\frac{1}{T} \sum_{t=0}^T {\mathcal R}_\tau(Y_t-\mu_{\tau,N}(Y_t|Y_{t-1}))
\stackrel{a.s.}\longrightarrow 
\frac{1}{T} \sum_{t=0}^T {\mathcal R}_\tau(Y_t-\mu_{\tau}(Y_t|Y_{t-1}))
\quad\mbox{for } N\to +\infty\,.
$$
where $\mu_{\tau}(Y_t|Y_{t-1})$ and $\mu_{\tau,N}(Y_t|Y_{t-1})$ are defined as in the previous theorem taking as the distribution of $X$ the conditional distribution of $Y_t$ given $Y_{t-1}$.  
\end{corollary}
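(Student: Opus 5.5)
The plan is to reduce Corollary~\ref{cor-1} to Theorem~\ref{th-strong-consistency}, applied separately at each of the finitely many time indices $t=0,\dots,T$, and then to pass the convergence through a finite sum of continuous functions.

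\textbf{Setting up the conditioning.} Fix $T$ and work conditionally on the observed path $(Y_{-1},Y_0,\dots,Y_T)$. For each fixed $t$ and each value $y=Y_{t-1}$, the simulated draws $\tilde Y_{1,t},\dots,\tilde Y_{N,t}$ are i.i.d.\ with law equal to the conditional distribution of $Y_t$ given $Y_{t-1}=y$. That conditional law has finite second moment for almost every $y$, since square integrability gives ${\mathbb E}[{\mathbb E}[Y_t^2\mid Y_{t-1}]]={\mathbb E}[Y_t^2]<\infty$, so ${\mathbb E}[Y_t^2\mid Y_{t-1}=y]<\infty$ off a null set. Theorem~\ref{th-strong-consistency} therefore applies to this conditional law and gives $\mu_{\tau,N}(Y_t\mid Y_{t-1}=y)\to\mu_\tau(Y_t\mid Y_{t-1}=y)$ almost surely with respect to the simulation randomness. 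Integrating this conditional almost-sure statement over the law of $(Y_{t-1},Y_t)$, via a Fubini argument on the product space of data and simulations, yields $\mu_{\tau,N}(Y_t\mid Y_{t-1})\to\mu_\tau(Y_t\mid Y_{t-1})$ almost surely on the joint space.

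\textbf{Passing to the sum.} There are only finitely many indices $t$, and a finite intersection of probability-one events still has probability one, so the convergence above holds simultaneously for all $t\le T$. The map $m\mapsto{\mathcal R}_\tau(x-m)$ is continuous, being differentiable as shown in the proof of the theorem. Hence each summand ${\mathcal R}_\tau(Y_t-\mu_{\tau,N}(Y_t\mid Y_{t-1}))$ converges almost surely to ${\mathcal R}_\tau(Y_t-\mu_\tau(Y_t\mid Y_{t-1}))$. The finite sum, multiplied by the constant $1/T$, then converges as well. Stationarity and the Markov property are used only to ensure that the conditional law of $Y_t$ given the past depends on $Y_{t-1}$ alone, through a single kernel that is the same for every $t$.

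\textbf{Main obstacle.} The delicate step is the measure-theoretic one: turning ``for a.e.\ $y$, almost surely in the simulations'' into a single joint almost-sure statement. This requires a product-space construction in which the simulations are drawn from a regular conditional distribution (a Markov kernel), together with measurability of the event $\{\mu_{\tau,N}\to\mu_\tau\}$. Measurability follows from the Borel measurability of $\hat\mu_{\tau,N}$ noted in the proof of Theorem~\ref{th-strong-consistency}. I would first try to obtain the joint statement by applying Fubini's theorem to the indicator of the convergence event.
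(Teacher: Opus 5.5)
Your proposal is correct and follows the paper's route: apply Theorem~\ref{th-strong-consistency} to the conditional law of $Y_t$ given $Y_{t-1}$ at each of the finitely many indices $t$, then use the continuity of $m\mapsto\mathcal{R}_\tau(x-m)$ to pass the almost-sure convergence through the finite sum. The paper simply calls this an immediate consequence; your Markov-kernel construction and Fubini step spell out the measure-theoretic detail it leaves implicit, namely turning the statement ``almost surely in the simulations, for a.e.\ value of $Y_{t-1}$'' into a single joint almost-sure statement.
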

\begin{proof} It is an immediate consequence of the above theorem and the continuity of the function $m\mapsto {\mathcal R}_\tau(x-m)$.  
\end{proof}

\begin{corollary}\label{cor-2} If $(Y_t)_t$ is a Markov square-integrable ergodic stationary process, then, taking first $T\to +\infty$ and then $N\to +\infty$, we have 
$$
\frac{1}{T} \sum_{t=0}^T {\mathcal R}_\tau(Y_t-\mu_{\tau,N}(Y_t|Y_{t-1}))
\stackrel{a.s.}\longrightarrow 
E[\, {\mathcal R}_\tau(Y_t-\mu_{\tau}(Y_t|Y_{t-1}))\,]\,.
$$
where $\mu_{\tau}(Y_t|Y_{t-1})$ and $\mu_{\tau,N}(Y_t|Y_{t-1})$ are defined as in the previous theorem taking as the distribution of $X$ the conditional distribution of $Y_t$ given $Y_{t-1}$.  
\end{corollary}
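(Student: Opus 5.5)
The plan is to take the two limits in the stated order and invoke a different earlier result at each stage. Write $A_{T,N}=\frac{1}{T}\sum_{t=0}^T {\mathcal R}_\tau(Y_t-\mu_{\tau,N}(Y_t|Y_{t-1}))$ and $A_T=\frac{1}{T}\sum_{t=0}^T {\mathcal R}_\tau(Y_t-\mu_{\tau}(Y_t|Y_{t-1}))$. Corollary~\ref{cor-1} already gives $A_{T,N}\to A_T$ a.s. as $N\to\infty$ for every fixed $T$. It remains to show that $A_T\to \mathbb{E}[{\mathcal R}_\tau(Y_t-\mu_\tau(Y_t|Y_{t-1}))]$ a.s. as $T\to\infty$, and then to make sure the order of limits in the statement is handled correctly.

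\textbf{Ergodic step.} Set $g(y_{t-1},y_t)={\mathcal R}_\tau(y_t-\mu_\tau(Y_t|Y_{t-1}=y_{t-1}))$. This is a measurable function of the pair $(Y_{t-1},Y_t)$; measurability of $y\mapsto\mu_\tau(Y_t|Y_{t-1}=y)$ follows as in the measurability remark in the proof of Theorem~\ref{th-strong-consistency}, applied to a regular version of the conditional law. Since $(Y_t)_t$ is stationary and ergodic, the pair process $(Y_{t-1},Y_t)_t$ is also stationary and ergodic, as a factor of an ergodic shift. Birkhoff's ergodic theorem then gives $A_T\to\mathbb{E}[g(Y_0,Y_1)]$ a.s., provided $g(Y_0,Y_1)$ is integrable. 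For integrability I would use the bound ${\mathcal R}_\tau(x-m)\le (x-m)^2$ and the optimality of $\mu_\tau(\cdot|Y_{t-1})$ against the constant competitor $m=0$:
\[
\mathbb{E}\big[{\mathcal R}_\tau(Y_t-\mu_\tau(Y_t|Y_{t-1}))\,\big|\,Y_{t-1}\big]\le \mathbb{E}\big[{\mathcal R}_\tau(Y_t)\,\big|\,Y_{t-1}\big]\le \mathbb{E}[Y_t^2\,|\,Y_{t-1}].
\]
Taking expectations gives a finite bound by square integrability. The Markov property is used here only so that $\mu_\tau(Y_t|Y_{t-1})$ is the relevant conditional expectile; the ergodic step itself needs only stationarity of the pairs.

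\textbf{Order of limits (main obstacle).} Read literally, ``first $T\to\infty$ then $N\to\infty$'' requires $\lim_N\lim_T A_{T,N}$, whereas Corollary~\ref{cor-1} controls $\lim_N A_{T,N}$ at fixed $T$. I would first try to reduce this to a single-index argument. For fixed $N$, the simulated expectiles at each $t$ are computed from fresh i.i.d.\ draws. The summands $\xi_t^N={\mathcal R}_\tau(Y_t-\mu_{\tau,N}(Y_t|Y_{t-1}))$ therefore form a stationary ergodic sequence on the product space of the data and the simulation noise, and Birkhoff gives $\lim_T A_{T,N}=\mathbb{E}[\xi_0^N]$ a.s. The remaining limit $\mathbb{E}[\xi_0^N]\to\mathbb{E}[g(Y_0,Y_1)]$ as $N\to\infty$ is deterministic. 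By Theorem~\ref{th-strong-consistency} and the continuity of $m\mapsto{\mathcal R}_\tau(x-m)$, $\xi_0^N\to g(Y_0,Y_1)$ a.s., so the issue is domination. Using ${\mathcal R}_\tau(x-m)\le 2x^2+2m^2$, it suffices to control $\mathbb{E}[\mu_{\tau,N}^2]$ uniformly in $N$. I would get this from the fact that the empirical expectile lies in the convex hull of the sample and satisfies $|\mu_{\tau,N}|\le c_\tau \frac1N\sum_i|\tilde x_i|$, which yields uniform integrability and hence convergence by Vitali's theorem.

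If that domination argument turns out to be delicate, the fallback is the interpretation the paper's Remark suggests: take $N\to\infty$ at each fixed $T$ by Corollary~\ref{cor-1}, and then let $T\to\infty$ via the ergodic step above.
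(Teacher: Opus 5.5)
Your proposal is correct, and your fallback is exactly the paper's proof. The paper uses the ergodic strong law to get $T^*$ with $|A_T-\mathbb{E}[g(Y_0,Y_1)]|\le\epsilon/2$ for $T\ge T^*$. For each such $T$ it uses Corollary~\ref{cor-1} to find $N^*(T)$ with $|A_{T,N}-A_T|\le\epsilon/2$ for $N\ge N^*$, and it concludes by the triangle inequality. So the paper reads ``first $T$, then $N$'' as ``fix $T$ large, then take $N$ large depending on $T$''; that is, it proves $\lim_T\lim_N A_{T,N}$. The order-of-limits obstacle you raise is resolved by this reading, not by an interchange argument.

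Your main route proves the other iterated limit, $\lim_N\lim_T A_{T,N}$, and it is a genuinely different argument. It never uses Corollary~\ref{cor-1}, and it identifies the limit $\mathbb{E}[\xi_0^N]$ at fixed Monte Carlo size $N$, so the Monte Carlo bias becomes an explicit deterministic quantity. It then removes $N$ by Vitali. The price is two ingredients you should make explicit.

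\begin{itemize}
\item[(i)] Ergodicity of $(\xi_t^N)_t$. Write the draws as $\tilde x_{i,t}=G(Y_{t-1},U_{i,t})$, with $G(y,\cdot)$ the conditional quantile function and $(U_{i,t})$ i.i.d.\ uniform and independent of $Y$. Then $(\xi_t^N)_t$ is a factor of the product of the shift of $Y$ with a Bernoulli shift. This product is ergodic because Bernoulli shifts are weakly mixing; a product of two merely ergodic systems need not be ergodic.
\item[(ii)] Uniform integrability. Vitali needs uniform integrability of $\{\mu_{\tau,N}^2\}_N$, not merely $\sup_N\mathbb{E}[\mu_{\tau,N}^2]<\infty$ as your sentence suggests. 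Your convex-hull bound does deliver it. The empirical expectile is a weighted mean of the draws with weights in $\{\tau,1-\tau\}$, so $c_\tau=\max(\tau,1-\tau)/\min(\tau,1-\tau)$ works. Jensen then gives $\mu_{\tau,N}^2\le c_\tau^2\frac1N\sum_i\tilde x_{i,0}^2$. Each $\tilde x_{i,0}$ has, unconditionally, the law of $Y_0$, and averages of identically distributed integrable variables are uniformly integrable. Hence $0\le\xi_0^N\le 2Y_0^2+2c_\tau^2\frac1N\sum_i\tilde x_{i,0}^2$ is uniformly integrable.
\end{itemize}

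The paper's route is much shorter and needs neither ingredient. However, it leaves implicit what your ergodic step supplies for Birkhoff's theorem: ergodicity of the pair process $(Y_{t-1},Y_t)_t$ as a factor of an ergodic shift, and integrability of $g$ via $\mathbb{E}[g\mid Y_{t-1}]\le\mathbb{E}[Y_t^2\mid Y_{t-1}]$.
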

\begin{proof} By the strong law of large numbers for ergodic stationary processes, we have
$$
\frac{1}{T} \sum_{t=0}^T {\mathcal R}_\tau(Y_t-\mu_{\tau}(Y_t|Y_{t-1}))
\stackrel{a.s.}\longrightarrow 
E[\, {\mathcal R}_\tau(Y_t-\mu_{\tau}(Y_t|Y_{t-1}))\,]\quad\mbox{ for } T\to +\infty\,.
$$
Hence, with probability one, for each $\epsilon>0$, there exists $T^*$ (depending on $\omega$ and $\epsilon$) large enough such that 
\begin{equation*}
\begin{split}
&\left|\frac{1}{T} \sum_{t=0}^T {\mathcal R}_\tau(Y_t-\mu_{\tau,N}(Y_t|Y_{t-1})) -
E[\, {\mathcal R}_\tau(Y_t-\mu_{\tau}(Y_t|Y_{t-1}))\,]\right|\leq \\
&\left|\frac{1}{T} \sum_{t=0}^T {\mathcal R}_\tau(Y_t-\mu_{\tau}(Y_t|Y_{t-1})) -
E[\, {\mathcal R}_\tau(Y_t-\mu_{\tau}(Y_t|Y_{t-1}))\,]\right|+\\
&\left|\frac{1}{T} \sum_{t=0}^T {\mathcal R}_\tau(Y_t-\mu_{\tau,N}(Y_t|Y_{t-1})) -
\frac{1}{T} \sum_{t=0}^T {\mathcal R}_\tau(Y_t-\mu_{\tau}(Y_t|Y_{t-1})) 
\right|\leq\\
&\frac{\epsilon}{2} + \left|\frac{1}{T} \sum_{t=0}^T {\mathcal R}_\tau(Y_t-\mu_{\tau,N}(Y_t|Y_{t-1})) -
\frac{1}{T} \sum_{t=0}^T {\mathcal R}_\tau(Y_t-\mu_{\tau}(Y_t|Y_{t-1})) 
\right|\quad\forall T\geq T^*\,.
\end{split}
\end{equation*}
Then, by the previous corollary, for each fixed $T$, there exists $N^*$ sufficiently large (depending on $T$) 
such that 
$$
\left|\frac{1}{T} \sum_{t=0}^T {\mathcal R}_\tau(Y_t-\mu_{\tau,N}(Y_t|Y_{t-1})) -
\frac{1}{T} \sum_{t=0}^T {\mathcal R}_\tau(Y_t-\mu_{\tau}(Y_t|Y_{t-1})) 
\right|\leq \frac{\epsilon}{2}\quad\forall N\geq N^*\,.
$$
Summing up, with probability one, for each $\epsilon>0$, there exists $T^*$ (depending on $\omega$ and $\epsilon$) such that, for each $T\geq T^*$, there exists $N^*$ (depending on $T$) such that 
$$
\left|\frac{1}{T} \sum_{t=0}^T {\mathcal R}_\tau(Y_t-\mu_{\tau,N}(Y_t|Y_{t-1})) -
E[\, {\mathcal R}_\tau(Y_t-\mu_{\tau}(Y_t|Y_{t-1}))\,]\right|\leq \epsilon\,.
$$
\end{proof}

\subsection{Power-assessment model P3: non-pairwise but joint Granger causality in the mean}\label{app:gcprof}

Recall that model P3 is
\begin{flalign*}
& \begin{aligned}[t]
\makebox[2em][l]{\textbf{P3}}\; X_t &= 0.5 X_{t-1} + 5 Y_{t-1} Z_{t-1} + \eta_{x,t}, \\
Y_t &= \eta_{y,t}, \qquad Z_t = \eta_{z,t},
\end{aligned} &&
\end{flalign*}
where $(\eta_{x,t})_t, (\eta_{y,t})_t, (\eta_{z,t})_t$ are three independent white Gaussian noises.\\

We will show analytically that there is no individual Granger causality in the mean from $Y\rightarrow X$ nor from $Z\rightarrow X$, but there is joint Granger causality in the mean from $(Y,Z)\rightarrow X$. To this end, we start with $Y\rightarrow X$. Taking conditional expectations of $X_t$ given $X_{t-1}$ yields
$$
\mathbb{E}[X_t|X_{t-1}]=0.5 X_{t-1}+5\mathbb{E}[Y_{t-1} Z_{t-1}|X_{t-1}]+\mathbb{E}[\eta_{x,t}|X_{t-1}],
$$

\noindent by the independence of $[Y_{t-1},Z_{t-1}]$ from $X_{t-1}$, and the fact that both have zero mean as well as $\eta_{x,t}$, we get 
$$
\mathbb{E}[X_t|X_{t-1}]=0.5 X_{t-1}.
$$

\noindent Similarly, we now take conditional expectations of $X_t$ given $X_{t-1}$ and $Y_{t-1}$ yielding
$$
\mathbb{E}[X_t|X_{t-1},Y_{t-1}]=0.5 X_{t-1}+5Y_{t-1} \mathbb{E}[Z_{t-1}|X_{t-1},Y_{t-1}]+\mathbb{E}[\eta_{x,t}|X_{t-1},Y_{t-1}].
$$
Given that $Z_{t-1}$ and $\eta_{x,t}$ are independent of $[X_{t-1},Y_{t-1}]$, and both have mean zero, we have
$$
\mathbb{E}[X_t|X_{t-1},Y_{t-1}]=0.5 X_{t-1}=\mathbb{E}[X_t|X_{t-1}].
$$
Since both conditional expectations are identical, the mean squared prediction error using the lags of both $X_{t-1}$ and $Y_{t-1}$ is the same as the one using only $X_{t-1}$, implying that there is no Granger causality in the mean from $Y$ to $X$. The exact same argument holds for the case of $Z\rightarrow X$. \\

Now, we will show that there is Granger causality in the mean from $(Y,Z)\rightarrow X$. We take conditional expectation of $X_t$ given $[X_{t-1}$, $Y_{t-1},Z_{t-1}]$, considering that $\eta_{x,t}$ is a Gaussian white noise, independent of $(\eta_{y,t-1},\eta_{z,t-1})$, we get
$$
\mathbb{E}[X_t|X_{t-1},Y_{t-1},Z_{t-1}]=0.5 X_{t-1}+5Y_{t-1} Z_{t-1}
$$
Since this conditional expectation differs from $\mathbb{E}[X_t|X_{t-1}] = 0.5X_{t-1}$, we now compute the mean squared prediction errors explicitly to confirm the improvement. The mean squared prediction error using only $X_{t-1}$ is
$$
\mathbb{E}\left[(X_t - \mathbb{E}[X_t|X_{t-1}])^2\right] = \mathbb{E}\left[(5Y_{t-1}Z_{t-1} + \eta_{x,t})^2\right].
$$
Expanding and using the independence of $Y_{t-1}$, $Z_{t-1}$, and $\eta_{x,t}$, as well as $\mathbb{E}[Y_{t-1}^2]=\mathbb{E}[Z_{t-1}^2]=\mathbb{E}[\eta_{x,t}^2]=1$, we obtain
$$
\mathbb{E}\left[(X_t - \mathbb{E}[X_t|X_{t-1}])^2\right] = 25\,\mathbb{E}[Y_{t-1}^2]\,\mathbb{E}[Z_{t-1}^2] + \mathbb{E}[\eta_{x,t}^2] = 25 + 1 = 26.
$$
The mean squared prediction error using $X_{t-1}$, $Y_{t-1}$, and $Z_{t-1}$ is
$$
\mathbb{E}\left[(X_t - \mathbb{E}[X_t|X_{t-1},Y_{t-1},Z_{t-1}])^2\right] = \mathbb{E}\left[\eta_{x,t}^2\right] = 1.
$$
Since $1 < 26$, the mean squared prediction error strictly decreases when both $Y_{t-1}$ and $Z_{t-1}$ are included alongside the past of $X$. Therefore, there is Granger causality in the mean from $(Y,Z)$ to $X$. 
\\

We also provide in Table~\ref{p3sims} the results of a simulation study, where we have applied 
the pairwise classic linear Granger causality F-test based on restricted and unrestricted autoregressive models as well as the pairwise version of our M-vine test with $\tau=1/2$.\\

	\begin{table}[H]
		\centering
		\footnotesize
		\begin{tabular}{>{\centering\hspace{0pt}}m{0.35\linewidth}>{\centering\arraybackslash\hspace{0pt}}m{0.2\linewidth}}
			\toprule
			\textbf{Test} & $T=500$ \\
			\midrule M-vine test $Y \to X$ & 0.072 \\
			M-vine test $Z \to X$ & 0.048 \\	
            Linear F-test $Y \to X$ &  0.244\\
			Linear F-test $Z \to X$ & 0.246 \\
			\bottomrule
		\end{tabular}
		\caption{Model P3: empirical rejection rates for the pairwise classic linear F-test for Granger causality in the conditional mean and 
        for the pairwise  M-vine test in the mean ($\tau=1/2$). We have taken $\alpha = 0.05$ and $S = 500$ simulations.}
		\label{p3sims}
	\end{table}

The results in Table~\ref{p3sims} show that the M-vine test has a rejection rate close to the nominal significance level of $\alpha = 0.05$ for both directions, demonstrating the absence of individual Granger causality from both $Y \to X$ and $Z \to X$. In contrast, the linear F-test exhibits clear size distortions stemming from the fact that this test is built under the assumption of a linear model, which is evidently misspecified given that P3 features an interaction term. These results not only confirm the analytical findings from this section, but also further illustrate the advantage of our copula-based approach in settings with non-linear dependence.

\subsection{Power-assessment model P3: pairwise and joint Granger causality in $\tau$-expectiles, with $\tau\neq 1/2$} 
\label{app:gcprof2}

We can deepen the analysis of model P3, showing  that, for every $\tau\in(0,1)$, there is joint Granger causality in the $\tau$-expectile from $(Y,Z)\to X$; while there is only pairwise Granger causality in the $\tau$-expectile from $Y\to X$ (and from $Z\to X$) if and only if $\tau\neq 1/2$.\\

\noindent 
By the model assumptions and the properties of the expectiles \citep{bellini2014}, we can write  
%
\[
\mu_\tau(X_t\mid X_{t-1})=0.5X_{t-1}+c_\tau,
\]
where $c_\tau=\mu_\tau(W_t)$ is a real constant depending on the distribution of $W_t=5Y_{t-1}Z_{t-1}+\eta_{x,t}$ and such that $c_{1/2}=0$.\\
\indent We now compute the conditional $\tau$-expectile of $X_t$ given $X_{t-1}$ and $Y_{t-1}$. We have 
$$
X_t\mid(X_{t-1},Y_{t-1}) \stackrel{D}= 0.5X_{t-1} + U \sqrt{25Y_{t-1}^2+1}
$$
where $U\stackrel{D}=\mathcal{N}(0,1)$. Hence, by the properties of the expectiles, we get 
\[
\mu_\tau(X_t\mid X_{t-1},Y_{t-1})
=
0.5X_{t-1}+k_\tau\sqrt{25Y_{t-1}^2+1},
\]
where $k_\tau = \mu_\tau(U)$. 
In the case of $\tau\neq 1/2$, we have $k_\tau \neq 0$, so that $k_\tau\sqrt{25Y_{t-1}^2+1}$ is a non-trivial random variable, and consequently, $\mu_\tau(X_t\mid X_{t-1},Y_{t-1})$ differs almost surely from $\mu_\tau(X_t\mid X_{t-1})$. Hence, in this case, there is individual Granger causality in the $\tau$-expectile from $Y\to X$. The same argument follows for the case of $Z\to  X$ given the symmetry of P3.\\\indent 
Lastly, we show the joint Granger causality in the $\tau$-expectile from $(Y,Z)$ to $X$ for every $\tau\in(0,1)$. Conditioning on $(X_{t-1},Y_{t-1},Z_{t-1})$, the only remaining randomness in $X_t$ comes from $\eta_{x,t}\stackrel{D}=\mathcal{N}(0,1)$, so we obtain 
$$
\mu_\tau(X_t\mid X_{t-1},Y_{t-1},Z_{t-1}) = 0.5X_{t-1} + 5Y_{t-1}Z_{t-1} + k_\tau.
$$
Since $Y_{t-1}Z_{t-1}$ is a non-trivial random variable, the above quantity  differs almost surely from $\mu_\tau(X_t\mid X_{t-1})$ for every $\tau$. Therefore, for all values of $\tau$, there is joint Granger causality in $\tau$-expectile from $(Y,Z)\to X$. 
\end{document}